\documentclass[11pt,a4paper]{article}

\usepackage[margin=2.0cm]{geometry}
\usepackage[utf8]{inputenc}
\usepackage[english]{babel}

\usepackage{amsmath,amssymb,amsthm,amstext,mathtools,authblk}
\usepackage{bm}
\usepackage{mathrsfs}
\usepackage{latexsym}
\usepackage{physics}

\usepackage{graphicx}
\usepackage{placeins}
\usepackage[usenames,dvipsnames]{xcolor}

\usepackage[normalem]{ulem}
\usepackage{soul}
\setstcolor{red}

\usepackage[colorlinks=true,citecolor=blue,urlcolor=blue,linkcolor=blue]{hyperref}

\DeclarePairedDelimiter{\ceil}{\lceil}{\rceil}

\newcommand{\R}{\mathbb{R}}
\DeclareMathOperator{\vol}{vol}

\newtheorem{theorem}{Theorem}
\newtheorem{lemma}[theorem]{Lemma}
\newtheorem{proposition}[theorem]{Proposition}

\theoremstyle{definition}

\theoremstyle{remark}

\newcommand{\rmd}{\mathrm{d}}

\begin{document}

\title{\bf Analysis of Conjectural Improvements to Minkowski’s\\ Lower Bound on the Sphere Packing Density}

\author[1]{Carlo Vanoni}
\author[2]{Alan Guo}
\author[1,3,4,5]{Salvatore Torquato \thanks{torquato@princeton.edu}}

\affil[1]{Department of Physics, Princeton University, Princeton, New Jersey, 08544, USA}
\affil[2]{Independent Researcher, Newton, Massachusetts, USA}
\affil[3]{Department of Chemistry, Princeton University, Princeton, New Jersey, 08544, USA}
\affil[4]{Princeton Institute for the Science and Technology of Materials, Princeton University, Princeton, New Jersey 08544, USA}
\affil[5]{Program in Applied and Computational Mathematics, Princeton University, Princeton, New Jersey 08544, USA}

\date{\today}

\maketitle

\begin{abstract}
Torquato and Stillinger conjectured an exponential improvement of Minkowski’s classical lower bound on the maximal density of sphere packings in high-dimensional Euclidean space $\mathbb{R}^d$, with exponential rate $2^{-(0.7786524795\ldots+o(1))d}$, using a pair-correlation-function optimization framework. Conditional on their realizability conjecture, we show that a simple family of hyperuniform pair correlation functions yields polynomial improvements over Minkowski’s lower bound of the form $\phi_{\mathrm{max}}\gtrsim d^\beta 2^{-d}$ for every fixed $\beta>1$ in the high-$d$ limit. As the polynomial exponent is allowed to increase with dimension, this family continuously approaches the previously conjectured exponential improvement.
We first give an explicit hyperuniform construction based on Gauss–Radau quadrature. It contains $\lfloor(d-1)/4\rfloor$ positive delta-function shells and has exponential rate $2^{-(0.622556248918\ldots+o(1))d}$, providing a concrete mechanism by which growing radial complexity surpasses the Torquato–Stillinger rate. We then prove strong duality between the unrestricted pair-correlation program and its Cohn–Elkies dual: their optimal values coincide, with no duality gap. Combining this result with approximation by ordinary finite-band functions, we show that the unrestricted pair-correlation program attains the optimal Cohn–Elkies exponential rate $2^{-(0.6044005\ldots+o(1))d}$ in the high-$d$ limit. This result is general and optimal but does not supply a comparable closed-form family.
We further derive the Torquato–Stillinger exponential rate independently from the Cohn–Elkies dual linear-programming upper-bound formulation, showing that its radial objective test functions cannot asymptotically exclude packings with the Torquato–Stillinger density scaling. The agreement between these approaches provides new evidence that exceptionally dense disordered sphere packings may exist in high dimensions and strengthens the case for the Torquato–Stillinger conjectural lower bound.
\end{abstract}

\section{Introduction}

Sphere-packing problems remain a vibrant area of research, with far-reaching applications throughout the physical, mathematical, and biological sciences~\cite{To26b,To18b,To10c,Du19,Ro24,Chaik95}. In physics, disordered sphere packings in dimensions $d$ greater than three provide valuable models for understanding liquids, metastable states, and glasses~\cite{Pa10,Ch17,Pa06,Ku12,Cha14,Ch22}, while also offering insight into the behavior of disordered phases in lower-dimensional systems. More broadly, the large-dimensional limit has emerged as a powerful theoretical framework, much as it has in other areas of physics, including the study of certain aspects of quantum gravity~\cite{Af20,Ha19}.
The quest to determine the densest sphere packings in high-dimensional Euclidean spaces is of fundamental interest to both mathematicians and physicists. 
The mathematical study of the densest sphere packings across spatial dimensions has long been a central theme in discrete geometry~\cite{Co93,Pa10}, with deep connections to number theory~\cite{Co93,Grah03}, cryptography~\cite{Co93}, and coding theory~\cite{Sh48,Co93}. Beyond its intrinsic mathematical significance, this problem has profound practical implications: the optimal transmission of digital information over a noisy communication channel is mathematically equivalent to finding the densest sphere packing in a sufficiently high-dimensional Euclidean space~\cite{Sh48,Sh49,Co93}. The resulting error-correcting codes underpin modern digital communication and data storage technologies, including compact disks, cellular networks, and the Internet.

Our focus in this paper is the sphere packing problem, which asks for the densest packing of spheres in $d$-dimensional Euclidean space $\mathbb{R}^d$  at number density $\rho$ (number of sphere centers per unit volume), i.e., what fraction of  $\mathbb{R}^d$ can be covered by balls that do not intersect except along their boundaries? 
We call this the packing fraction $\phi = \rho \vol(B_{1/2}^d)$, where $\vol(B_{1/2}^d)=\pi^{d/2}/(2^d \Gamma(1+d/2))$ is the volume of the unit-diameter ball~\cite{To02c}.

Torquato and Stillinger formulated a linear program (LP) to obtain lower bounds on the maximal packing fraction, which we denote by  $\phi_{\mbox{\scriptsize max}}$ 
\cite{To02c}. This formulation was subsequently applied to support the counterintuitive conjecture that the densest sphere packings for sufficiently large $d$ may be disordered or at least possess fundamental cells whose size and structural complexity increase with dimension~\cite{To06b}.
Specifically, for a fixed radial test function $g_2(r)$, one determines the largest packing fraction $\phi_*$ such that $g_2(r)\geq 0$ and the associated structure factor satisfies $S(k)\geq 0$ for all $k$, where $S(k) =1+\rho {\tilde h}(k)$ and ${\tilde h}(k)$ is the Fourier transform of the total correlation function $h(r)\equiv g_2(r)-1$. If this $g_2$ is realizable by sphere packings throughout the interval $0<\phi\leq\phi_*$, then one obtains the lower bound
\begin{equation}
\phi_{\mbox{\scriptsize max}} \ge \phi_*\;,
\label{lower-bound}
\end{equation}
where $\phi_*$ is called the {\it terminal} packing fraction.
This optimization formulation is the {\it dual} of the infinite-dimensional LP devised by Cohn and Elkies~\cite{Co03} to obtain upper bounds on $\phi_{\mbox{\scriptsize max}}$. 
This duality was identified by Torquato and Stillinger~\cite{To06b}; the absence of a duality gap in the generalized measure formulation was subsequently proved by Cohn, de Laat, and Salmon~\cite[Theorem~3.1]{Co22}. The hard-core condition and the nonnegativity of $g_2(r)$ and $S(k)$ are precisely the constraints of this dual program, as described in Sec.~\ref{sec:finite-band-construction}.
Here, for spheres of unit diameter, one considers a radial test function $f(r)$ such that $f(r)\leq 0$ for $r\geq 1$ and ${\tilde f}(k)\geq 0$ for all $k$,
where the radial function ${\tilde f}(k)$ is the Fourier transform of $f(r)$. The maximal number density $\rho_{\mbox{\scriptsize max}}$ is then bounded from above  by $f(0)/{\tilde f}(0)$, or equivalently, the maximal packing fraction satisfies
\begin{equation}
    \phi_{\mbox{\scriptsize max}} \leq \vol(B_{1/2}^d)\frac{f(0)}{{\tilde f}(0)}
    = \frac{\pi^{d/2} f(0)}{2^d\Gamma(1+d/2){\tilde f}(0)}.
\end{equation}
The Cohn-Elkies LP formulation was used to prove that
the $E_8$ and Leech lattices are the densest packings in $\mathbb{R}^8$
and $\mathbb{R}^{24}$, respectively \cite{Vi17,Co17}.

Torquato and Stillinger~\cite{To06b} conjectured that a test radial function $g_2(r)$ is a realizable pair correlation function of a translationally invariant disordered sphere packing in $\mathbb{R}^d$ for $0 \le \phi \le \phi_*$ in the high-$d$ limit if and only if the nonnegativity conditions on $S(k)$ and $g_2(r)$ are met.
In other words, while there exist other necessary conditions for the realizability of disordered point configurations in sufficiently low dimensions, the nonnegativity of $g_2(r)$ and $S(k)$ is conjectured to become sufficient in the high-$d$ limit, i.e., as is the case for Gaussian random fields in any dimension. 
This conjecture is based on the so-called {\it decorrelation} principle \cite{To06b,An16}, which states that unconstrained correlations in disordered sphere packings vanish asymptotically in high dimensions, and the $n$-particle correlation functions, $g_n$ for any $n \ge 3$, can be inferred entirely (up to small errors) from a knowledge of $\rho$ and the pair correlation function $g_2$.
If true, then the inequality (\ref{lower-bound}) is a rigorous lower bound. 
It is noteworthy that the decorrelation principle is exhibited for a variety of different disordered packing models \cite{To26b}, even as the dimension increases for relatively small $d$. 
For example, for maximally random jammed (MRJ) states, spatial correlations have been shown to diminish as the space dimension
increases from three to six \cite{Sk06}. This low-$d$
behavior in MRJ packings is consistent with a high-dimensional asymptotic scaling of 
a pair correlation function that becomes a contact delta function 
plus a flat background beyond the hard core, yielding a lower bound of the form $d/2^d$, as described immediately below.

As we now discuss, there is strong evidence to support the validity of this conjecture.
First, we note that other necessary conditions [beyond the nonnegativity of $g_2(r)$ and $S(k)$], including the Yamada condition \cite{Ya61}, appear to only have relevance in low dimensions \cite{Cos04,To06b,To26b}.
Using a step function $\Theta(r-D)$ as test pair correlation function, corresponding to a disordered packing with no correlations beyond the hard core, it is easily shown that $\phi_{\mbox{\scriptsize max}} \ge 1/2^d$ (see Refs. \cite{To02c,To06b}), which, as noted earlier, is exactly achieved by the ghost random sequential addition (ghost RSA) packing as $d\to \infty$ and is the same scaling as Minkowski's lower bound for a lattice. 
This is a vivid example that the aforementioned conjecture is true, since the test function is realizable by disordered sphere packings for all $\phi$ up to and including $1/2^d$. 
An improved conjectural lower bound $(d+2)/2^{d+1}$  was obtained using a test function that again would correspond to a disordered packing that is the sum of a step function and a Dirac-delta function at contact of arbitrary weight (see Fig.~\ref{fig:g2r}), and then optimizing~\cite{To02c,To06b}.
A striking result is that this test function was shown to be numerically realizable by packings in very low dimensions, namely, $d=2$~\cite{Uc06a}, implying (because of decorrelation) that it would be achievable in any higher dimension.
Moreover, the high-$d$ asymptotic form $d/2^d$ is the same as the rigorous lower bound on lattice packings obtained by  Ball~\cite{Ball92}. 
More importantly, it has recently been proved that this asymptotic form is a rigorous lower bound on disordered packings  \cite{Je19}, providing definitive evidence that this second test function is indeed realizable by disordered packings for all $\phi$ in the interval $0<\phi\leq (d+2)/2^{d+1}$, bolstering the conjecture even further.

Finally, employing a test radial function for $g_2(r)$ corresponding to a putative disordered sphere packing (a step plus a delta function with a gap, described in the next section and illustrated in Fig.~\ref{fig:g2r}), Torquato and Stillinger found the following conjectural lower bound:
\begin{equation}
\label{eq:lower_bound}
    \phi_{\mbox{\scriptsize max}} \geq \phi^* \sim \frac{d^{1/6} e^{a_1 (d/2)^{1/3}}}{2^{2/3}D_1 \sqrt{\pi} 2^{[3-\log_2(e)]d/2}} \approx \frac{3.276100896 \, d^{1/6}\, e^{1.47292\, d^{1/3}}}{2^{0.7786524795\, d}},
\end{equation}
where $a_1 \approx 1.8557571$ and $D_1 \approx 0.108487857$,
providing the putative exponential improvement of Minkowski's lower bound.
This exponential decay rate has also been recently matched using a completely different approach (particular uncertainty principles of Fourier analysis) \cite{Ed25}.
We will describe an alternative procedure to derive the same exponential rate.
This shows that the Cohn-Elkies LP objective cannot be pushed below the Torquato-Stillinger exponential rate by this result.

Very recently, Klartag~\cite{Kl25} proved that there are lattice packings whose packing fraction is at least $c d^2/2^d$, thus providing the first improvement on Ball's bound by a factor of $d$. 
Can one devise test $g_2$'s corresponding to disordered packings that provide such quadratic improvement on Minkowski's lower bound? 
Indeed, the fact that  Minkowski's bound can be conjecturally improved exponentially implies that one should be able to devise test functions that give an improvement of the form $d^\beta$ for any $\beta >1$. In Section~\ref{sec:poly_improvement} we will show that this polynomial improvement can indeed be achieved.

We also introduce in Sec.~\ref{sec:radau-construction} an explicit family of pair correlations based on Gauss--Radau quadrature whose number of positive radial shells grows linearly with dimension.  These pair correlations satisfy the ordinary decorrelation property $g_2(r)=1$ beyond a finite radius and have nonnegative, hyperuniform structure factors. Their terminal packing fractions have the asymptotic form
\begin{equation}
    \phi_{*,d}=2^{-(0.622556248918\ldots+o(1))d},
\end{equation}
which improves on the Torquato--Stillinger exponent $0.7786524795\ldots$ while remaining above the unrestricted generalized two-point optimum $0.6044005\ldots$.

If we do not restrict to a particular form of $g_2$ other than satisfying the necessary conditions, we can achieve a better exponential rate even while still satisfying the standard decorrelation principle. It follows from strong duality applied to the Cohn-Elkies LP that the optimal exponent is $\alpha_*=0.6044005\ldots$~\cite{Co22,openai2026math}. We describe in Sec.~\ref{sec:finite-band-construction} a family of pair correlation functions that achieves this exponent. Each $g_2$ function consists of finitely many nonnegative radial bands and equals unity beyond a finite radius. Thus, attaining the optimal exponential rate is compatible with the standard decorrelation principle used here. Moreover, the radius beyond which $g_2(r)=1$ can be prescribed to grow arbitrarily slowly with dimension, provided it tends to infinity.

Section 5 supplies an explicit intermediate construction: its linearly growing number of Gauss–Radau shells improves the Torquato–Stillinger exponent $0.7786524795 \ldots $ to $0.622556248918\ldots$, while preserving hyperuniformity and an exactly flat tail. Section~\ref{sec:finite-band-construction} then removes the special Gauss–Radau form. Using strong duality, cutoff and smoothing, and approximation by finitely many nonnegative radial bands, it shows that ordinary pair-correlation functions can approach the unrestricted optimum $0.6044005\ldots$. Thus Section 5 gives an explicit mechanism, whereas Section 7 establishes the optimal value for a much broader class. Moreover, in the Gauss–Radau family, the decorrelation radius remains bounded and approaches $2$ as $d\to\infty$. For the optimal-rate finite-band construction, the radius is finite in every dimension and may be chosen to diverge arbitrarily slowly with $d$.

Section~\ref{sec:poly_improvement} describes the aforementioned polynomial improvements. Section~\ref{sec:proof} derives the Torquato--Stillinger exponential rate as a lower bound on the Cohn--Elkies objective. Section~\ref{sec:universality} examines the persistence of this rate among fixed-complexity pair-correlation ansatzes. Section~\ref{sec:radau-construction} then introduces an explicit family of finite-range pair correlations that improves on the Torquato--Stillinger exponential rate. Section~\ref{sec:finite-band-construction} describes the finite-band construction at the optimal Cohn--Elkies exponent and its decorrelation properties. We make concluding remarks in Sec.~\ref{sec:discuss}.

\begin{figure}
    \centering
    \includegraphics[width=0.49\linewidth]{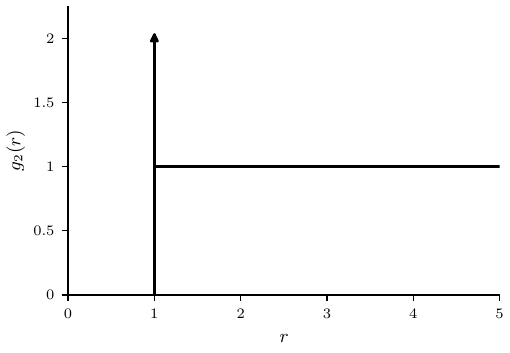}
    \includegraphics[width=0.49\linewidth]{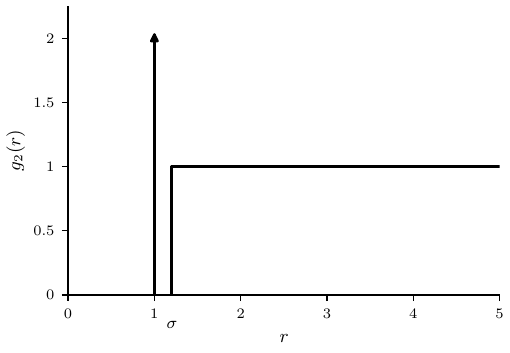}
    \caption{Radial distribution functions $g_2(r)$ constituted of a delta function at $r=1$ and a step function without a gap with the delta function (left) and with a gap $\sigma-1$ (see Eq.~\eqref{eq:g2r}) (right). The distribution function on the left provides the conjectural lower bound $\phi_{\mathrm{max}} \geq (d+2)/2^{d+1}$, whereas the one with the gap allows for obtaining the conjectural exponential improvement and the polynomial improvement $\phi_{\mathrm{max}} \gtrsim d^{\beta}/2^d$.}
    \label{fig:g2r}
\end{figure}

\section{Conjectural Polynomial Improvement on Minkowski's Lower Bound}
\label{sec:poly_improvement}

Conditional on the Torquato--Stillinger realizability conjecture, we derive a polynomially improved lower bound on the maximal packing fraction of the form $\phi_{\mathrm{max}} \gtrsim d^{\beta}/2^d$, for any $\beta > 1$ (where $\gtrsim$ hides constants independent of $d$).
We also show how this analysis can lead to the same exponential rate found by Torquato and Stillinger~\cite{To06b}.

As a first step, we introduce some basic notation that we will use throughout the paper. Following Cohn and Elkies \cite{Co03}, we  define the Fourier transform of an $L^1$ function $f \colon \R^d \to \R$ by
\begin{equation}
\tilde{f}(\xi)=\int_{\mathbb R^d} f(x)e^{-2\pi i\langle x,\xi\rangle}\,\rmd x.
\end{equation}
Note that this definition of the Fourier
transform differs from the one used by Torquato and Stillinger \cite{To06b} by a factor of $2\pi$ in the argument of the exponential function.
If $f$ is a radial function, we will write $f(r)$ for $r \ge 0$ to denote the common value $f(x)$ for any $d$-dimensional vector $x$ with $r =|x|$.
We write
\begin{equation}
    v_d=\operatorname{vol}(B_1^d)=\frac{\pi^{d/2}}{\Gamma(1+d/2)}
\end{equation}
for the volume of the unit ball. Let $d\nu_d(r)$ be radial Lebesgue measure on $\mathbb{R}^d$, i.e.,
\begin{equation}
    d\nu_d(r)=d v_d r^{d-1}\,dr,
\end{equation}
so that if $f \colon \R^d \to \R$ is radial, then
\begin{equation}
\int_{\R^d} f(x) \, \rmd x = \int_0^\infty f(r) \, \rmd\nu_d(r).
\end{equation}

Let $\sigma_d$ be the uniform probability measure on the unit sphere $S^{d-1}$ in $\R^d$, and define $B_d \colon \R^d \to \R$ by
\begin{equation}
B_d(x)
=
\int_{S^{d-1}}e^{2\pi i \langle x,y \rangle}\,\rmd\sigma_d(y).
\end{equation}
In terms of Bessel functions,
\begin{equation}
B_d(r) = \Gamma(d/2) \frac{J_{d/2-1}(2\pi r)}{(\pi r)^{d/2-1}}
\end{equation}
for $r>0$, where $J_{d/2-1}$ is the Bessel function of order $d/2-1$, and $B_d(0)=1$.
Then the Fourier transform and Fourier inversion for radial functions become
\begin{equation}
\tilde{f}(q) = \int_0^\infty f(r) B_d(rq) \, \rmd\nu_d(r)
\end{equation}
and
\begin{equation}
f(r)
=
\int_0^\infty \tilde{f}(q) B_d(rq) \, \rmd\nu_d(q).
\label{f-driect}
\end{equation}

Following Torquato and Stillinger~\cite{To06b}, we consider a
test radial pair correlation function $g_2(r)$ that is a sum of a step function and a delta function with a gap (right panel of Fig.~\ref{fig:g2r}):
\begin{equation}\label{eq:g2r}
    g_2(r) = \Theta(r-\sigma) + \frac{Z}{\omega_d \rho} \delta(r-1),
\end{equation}
as shown in Fig.~\ref{fig:g2r}.
In the above expression, $Z$ denotes the average kissing number and $\omega_d$ is the surface area of the unit sphere $S^{d-1} \subset \R^d$.
The corresponding structure factor depends on the Fourier variable $q$ through the angular wavenumber
\[
    \kappa = 2\pi q.
\]
In the rest of this section, we write $S(\kappa)$ for this structure factor as a function of $\kappa$, i.e., for $S(q)$ evaluated at $q=\kappa/(2\pi)$. It is given by
\begin{equation}
    S(\kappa) = 1 + \rho \tilde{h}\left(\frac{\kappa}{2\pi} \right),
\end{equation}
where $\tilde{h}(\kappa)$ is the Fourier transform of $h(r) = g_2(r) - 1$ and $\rho$ is the number of sphere centers per unit volume. Using $g_2(r)$ given by Eq.~\eqref{eq:g2r} one finds
\begin{equation}
\label{eq:struc_fact}
    S(\kappa) = 1 - c_1(d) \frac{J_{d/2}(\kappa \sigma)}{(\kappa\sigma)^{d/2}} + c_2(d) \frac{J_{d/2-1}(\kappa)}{\kappa^{d/2-1}}.
\end{equation}
Here, the coefficients $c_1$ and $c_2$ are defined as
\begin{equation}
\label{eq:c1}
    c_1(d) = \phi \sigma^d 2^{3d/2} \Gamma(1+d/2)
\end{equation}
and
\begin{equation}
    c_2(d) = \frac{ 2^{d/2} Z\,\Gamma(1+d/2)}{d}.
\end{equation}
The putative exponential improvement obtained in Ref.~\cite{To06b} is obtained by determining the optimal values of the parameters $Z$ and $\sigma$ that maximize the packing fraction $\phi$ under the constraint that the corresponding structure factor is always nonnegative, and is also achieved by adding a second delta function~\cite{Sc08}.

\subsection{New Optimization Procedure}

Here, we will adopt a slightly different strategy in order to get the polynomial improvement using Eq.~\eqref{eq:struc_fact}. We require that
$Z$ and $\sigma$ have the following dependence on $d$:
\begin{equation}
    Z = d^{\beta},
\end{equation}
\begin{equation}
    \sigma = 1+\frac{c}{d^\alpha}.
\end{equation}
Then, we optimize over $\alpha$, $\beta$, and the constant $c$ such that the constraint $S(\kappa) \geq 0$ for all $\kappa$ is always obeyed.
As a first step, we consider the expansion of Eq.~\eqref{eq:struc_fact} for small $\kappa$, which generically reads
\begin{equation}
    S(\kappa) = A_0(\sigma, Z, \phi) + A_1(\sigma, Z, \phi)\kappa^2 + \mathcal{O}(\kappa^4)
\end{equation}
where
\begin{equation}
    A_0(\sigma, Z, \phi) = 1+Z-(2\sigma)^d \phi
\end{equation}
and
\begin{equation}
    A_1(\sigma, Z, \phi) = \frac{2^{d-1}\sigma^{d+2}\phi}{d+2}-\frac{Z}{2d}.
\end{equation}
Following Ref.~\cite{To06b}, we also require that the system is hyperuniform~\cite{To03a}, i.e., $S(0)=0$ and hence $A_0 = 0$, which leads to
\begin{equation}
    \phi = \frac{\left(d^{\beta }+1\right) }{2^{d}} \left(c d^{-\alpha }+1\right)^{-d}.
\end{equation}
Plugging this into the expression for $A_1$, we get
\begin{equation}
    A_1 = \frac{1}{2} \left(\frac{\left(d^{\beta }+1\right) \left(c d^{-\alpha }+1\right)^2}{d+2}-d^{\beta -1}\right).
\end{equation}
Expanding for large $d$, we get
\begin{equation}
\label{eq:a_1}
    A_1 \sim \frac{1}{2} d^{-2 (\alpha +1)} \left(c^2 (d-2) \left(d^{\beta }+1\right)+2 c (d-2) d^{\alpha } \left(d^{\beta }+1\right)+d^{2 \alpha } \left(d-2 d^{\beta
   }\right)\right)
\end{equation}
The coefficient $A_1$ needs to be positive, which can be achieved by having a positive term diverging more quickly than $d^{2\alpha+\beta}$ in Eq.~\eqref{eq:a_1}, and this can be obtained either by $\beta<1$ for any $\alpha>0$ or $\beta > 1$ with $\alpha<1$. 
Consequently, $\alpha > 1$ leads to $\left(c d^{-\alpha }+1\right)^{-d} \to 1$ as $d\to \infty$ with $\beta<1$ to have a positive $A_1$, which is not desired for $\phi \sim d^{\beta}/2^d$. 
If instead $\alpha<1$ is considered, $\beta>1$ guarantees $A_1>0$, but $\left(c d^{-\alpha }+1\right)^{-d} \to 0$ as a stretched-exponential as $d \to \infty$, which again is not desired for $\phi$. The only choice is to set $\alpha = 1$. In this case, $\left(c d^{-1 }+1\right)^{-d} \to e^{-c}$ as $d \to \infty$, so that $\phi \sim e^{-c} d^{\beta}/2^d$, and 
\begin{equation}
    A_1 = \left(\frac{1}{2 d}+\frac{c-1}{d^2}+O\left(\left(\frac{1}{d}\right)^3\right)\right)+d^{\beta }
   \left(\frac{c-1}{d^2}+O\left(\left(\frac{1}{d}\right)^3\right)\right)
\end{equation}
is positive if $c>1$.
Using the ansatzes for $Z$ and $\sigma$ in the previous section, and setting $\alpha = 1$, we get
\begin{equation}\label{eq:Zsigmaphi}
    Z = d^{\beta}, \quad \sigma = 1 + \frac{c}{d}, \quad \phi = \frac{\left(d^{\beta }+1\right) }{2^{d}} \left(c d^{-1 }+1\right)^{-d}.
\end{equation}

We now need to ensure that $S(\kappa) \geq 0$ for all $\kappa$, which can be satisfied or not depending on the values of $\beta$, $c$, and $d$.
We start by using Eq.~\eqref{eq:Zsigmaphi} to rewrite the structure factor specified by Eq.~\eqref{eq:struc_fact}  as
\begin{equation}
    S(\kappa) = 1-\left(d^{\beta }+1\right) \, _0F_1\left(;\frac{d+2}{2};-\frac{(c+d)^2 \kappa^2}{4 d^2}\right)+d^{\beta } \, _0F_1\left(;\frac{d}{2};-\frac{\kappa^2}{4}\right),
\end{equation}
which by construction is hyperuniform, i.e.,
\begin{equation}
    \lim_{\kappa \to 0} S(\kappa) = d^{\beta } \left(\frac{2 \Gamma \left(\frac{d}{2}+1\right)}{d \, \Gamma \left(\frac{d}{2}\right)}-1\right) = 0
\end{equation}
such that $S(\kappa) \sim \kappa^2$ in the limit $\kappa\to 0$.
Therefore, the structure factor in Eq.~\eqref{eq:struc_fact} has first minimum at $\kappa=0$ such that $S(0) = 0$.

Following Ref.~\cite{To06b}, we will now impose that the other local minima of $S(\kappa)$ are nonnegative. This condition will allow us to find the values of $\beta$ for which the structure factor is globally nonnegative. 
In particular, we will show that there exists a $\beta_c(d)$ such that all $\beta < \beta_c(d)$ are valid values for dimensions greater than or equal to $d$. We find $\beta_c(d)$ by imposing that $S(\kappa_{\mathrm{min}}) = 0$, where $\kappa_{\mathrm{min}}$ is the location of the first positive minimum of the structure factor, i.e., $S'(\kappa_{\mathrm{min}}) = 0$.

Setting $S'(\kappa) = 0$ gives the equation for $\kappa_{\mathrm{min}}$
\begin{equation}
\label{eq:deriv_condition}
    \frac{J_{d/2 +1}(\kappa_{\mathrm{min}} \sigma)}{\sigma^{d/2-1}} = \frac{\kappa_{\mathrm{min}}}{d(1+d^{-\beta})}J_{d/2}(\kappa_{\mathrm{min}}),
\end{equation}
while the condition $S(\kappa_{\mathrm{min}})=0$ gives
\begin{equation}
\label{eq:Sk_condition}
    \frac{c_1(d)}{\kappa_{\mathrm{min}}^{d/2}}\left[ \frac{J_{d/2}(\kappa_{\mathrm{min}} \sigma)}{\sigma^{d/2}} - \frac{\kappa_{\mathrm{min}}}{d(1+d^{-\beta})}J_{d/2-1}(\kappa_{\mathrm{min}}) \right] = 1
\end{equation}
From Eq.~(5-26) of Ref.~\cite{To06b}, we have that the solution to Eq.~\eqref{eq:deriv_condition} can be approximated as
\begin{equation}
\label{eq:kmin}
    \kappa_{\mathrm{min}} \approx x_0 - \frac{d(1+d^{-\beta})(y_0 - \sigma x_0)}{\frac{\beta_1}{\beta_2}\sigma^{d/2 - 1} x_0 - d \sigma} \approx x_0 - \frac{d(y_0 - \sigma x_0)}{\frac{\beta_1}{\beta_2}\sigma^{d/2 - 1} x_0 - d \sigma}.
\end{equation}
Using the expansions
\begin{equation}
    x_0 \simeq d/2 + a_1 (d/2)^{1/3} + \frac{a_2}{(d/2)^{1/3}}
\end{equation}
\begin{equation}
    y_0 \simeq d/2 + a_1 (d/2)^{1/3} + 1 + \frac{a_2}{(d/2)^{1/3}} + \frac{a_2}{3(d/2)^{2/3}} + \frac{a_3}{d/2}
\end{equation}
\begin{equation}
    \frac{\beta_1}{\beta_2} \simeq 1+ \frac{2}{3d/2} - \frac{2C_2}{3C_1 (d/2)^{5/3}}
\end{equation}
where $a_1 \simeq 1.8557571$, $a_2 \simeq 1.033150$, $a_3 \simeq -0.003971$, $C_1 \simeq -1.104938082$, and $C_2 \simeq 1.627074727$, we can determine $\kappa_{\mathrm{min}} = \kappa_{\mathrm{min}}(d,\beta)$ from Eq.~\eqref{eq:kmin}. Therefore, using Eq.~\eqref{eq:Sk_condition} and Eq.~\eqref{eq:c1}, the solution $\beta_c = \beta_c(d)$ of the equation 
\begin{equation}
\label{eq:beta_c}
     2^{d/2} (1+d^{\beta_c})\Gamma(1+d/2) = \kappa_{\mathrm{min}}(d)^{d/2}\left[ \frac{J_{d/2}(\kappa_{\mathrm{min}}(d) \sigma)}{\sigma^{d/2}} - \frac{\kappa_{\mathrm{min}}(d)}{d(1+d^{-\beta_c})}J_{d/2-1}(\kappa_{\mathrm{min}}(d)) \right]^{-1}
\end{equation}
provides the critical value we are looking for. 
Note that while first positive minimum $\kappa_{\mathrm{min}}\sim d/2$ for large $d$, the position of the first maximum (see Fig.~\ref{fig:Sk}) scales as $\kappa_{\mathrm{max}} \sim \sqrt{2d}$, as  can be verified expanding the hypergeometric functions for large $d$ around $\kappa^2 = 2d$.

\subsection{Numerical and Asymptotic Analysis of $\beta_c(d)$}

Since the values of the local minima of $S(\kappa)$ 
depend on $\beta_c(d)$, we first study $\beta_c(d)$ numerically and subsequently asymptotically. We return to analyzing the nonnegativity of the other 
local minima in Sec. \ref{sec:other-minima}.

We begin here by solving Eq.~\eqref{eq:beta_c} numerically with the approximation for $\kappa_{\mathrm{min}}$ in Eq.~\eqref{eq:kmin} and plot the solution in Fig.~\ref{fig:beta_c}.
\begin{figure}
    \centering
    \includegraphics[width=0.5\linewidth]{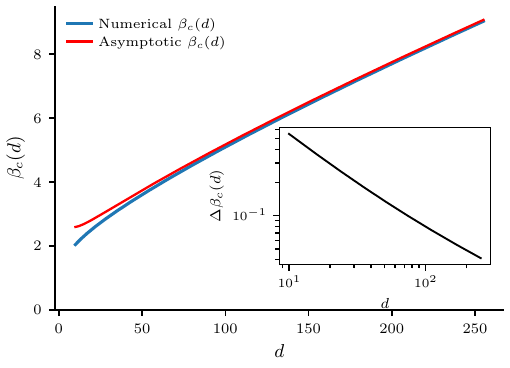}
    \caption{Numerical solution of Eq.~\eqref{eq:beta_c}, using Eq.~\eqref{eq:kmin} for $\kappa_{\mathrm{min}}$, for $c=2$ (blue), compared with the asymptotic expression in Eq.~\eqref{eq:asymptotic} (red). The inset shows that the difference between the two tends to zero as $d\to\infty$.}
    \label{fig:beta_c}
\end{figure}
According to Fig.~\ref{fig:beta_c}, $\beta_c(d)$ diverges with $d$, implying that one can choose any value of $\beta$ and have a well-defined structure factor in the limit $d \to \infty$. In particular, for a prescribed exponent $\beta$, the procedure produces admissible candidate pair statistics in dimensions $d \geq \ceil{\beta_c^{-1}(\beta)}$ (where $\ceil{\cdot}$ is the ceiling operator).
This provides a way of approaching the exponentially improved lower bound of Ref.~\cite{To06b}.

We now derive the asymptotic expansion of $\beta_c(d)$ starting from Eq.~\eqref{eq:beta_c}. We start by expanding $\kappa_{\mathrm{min}}(d)$ for large $d$, using Eq.~\eqref{eq:kmin}
\begin{equation}
    \kappa_{\mathrm{min}}(d) = \frac{d}{2} + \frac{a_1 \sqrt[3]{d}}{\sqrt[3]{2}}+\frac{c-2}{e^{c/2}-2}+ O(d^{-1/3})
\end{equation}
where $a_1$ satisfies $\operatorname{Ai}(-2^{1/3}a_1)=0$. We now expand the right-hand side of Eq.~\eqref{eq:beta_c}. To do so, we will adopt the uniform expansion of the Bessel functions
\begin{equation}
    J_{d/2}(d/2 + a_1 (d/2)^{1/3} + \tilde{c} + O(d^{-1/3})) = -2^{2/3} \, \tilde{c} \, \operatorname{Ai}'(-2^{1/3}a_1)(d/2)^{-2/3} + O(d^{-4/3}),
\end{equation}
\begin{equation}
    J_{d/2-1}(d/2 + a_1 (d/2)^{1/3} + \tilde{c} + O(d^{-1/3})) = -2^{2/3} (1+\tilde{c}) \operatorname{Ai}'(-2^{1/3}a_1)(d/2)^{-2/3} + O(d^{-4/3}),
\end{equation}
where we used that $\operatorname{Ai}(-2^{1/3}a_1)=0$ to remove the $O(d^{-1/3})$ contribution and defined $\tilde{c} = (c-2)/(e^{c/2}-2)$. For the asymptotic branch written below, we take $c>2\log 2$, so that the prefactor and the logarithm in Eq.~\eqref{eq:asymptotic} are positive and real, respectively. In addition, the $O(d^{-1})$ is removed by the explicit form of the $O(d^{-1/3})$ in the argument, which, however, we will not use, as we will focus on the leading term.
Using these expansions, the right-hand side of Eq.~\eqref{eq:beta_c} becomes
\begin{multline}
    \kappa_{\mathrm{min}}(d)^{d/2}\left[ \frac{J_{d/2}(\kappa_{\mathrm{min}}(d) \sigma)}{\sigma^{d/2}} - \frac{\kappa_{\mathrm{min}}(d)}{d(1+d^{-\beta_c})}J_{d/2-1}(\kappa_{\mathrm{min}}(d)) \right]^{-1} =\\
    =\frac{2^{-\frac{d}{2}-\frac{1}{3}} d^{-\beta_c +\frac{d}{2}+\frac{2}{3}} \left(d^{\beta_c}+1\right) \exp \left(\frac{1}{2} \left(2^{2/3} a_1
   \sqrt[3]{d}+c\right)+\frac{c-2}{e^{c/2}-2} + O(d^{-1/3})\right)}{\operatorname{Ai}'(-2^{1/3}a_1) \left(e^{c/2}-2\right)} (1+O(d^{-2/3})).
\end{multline}
Using Stirling's formula $\Gamma(1+d/2)=\sqrt{\pi d}\left(d/(2e)\right)^{d/2}(1+O(d^{-1}))$, we obtain
\begin{equation}
    d^{\beta_c} = \frac{2^{-\frac{d}{2}-\frac{1}{3}} d^{\frac{1}{6}} \exp \left(\frac{1}{2} \left(2^{2/3} a_1 \sqrt[3]{d}+c+d\right)+\frac{c-2}{e^{c/2}-2}\right)}{\sqrt{\pi }
   \operatorname{Ai}'(-2^{1/3}a_1) \left(e^{c/2}-2\right)}(1+O(d^{-1/3}))
\end{equation}
Taking the logarithm on both sides and dividing by $\log d$, we get the desired asymptotic solution
\begin{equation}
\label{eq:asymptotic}
\begin{aligned}
    \beta_c(d)={}&\frac{d(1-\log 2)}{2\log d}
    +\frac{a_1 d^{1/3}}{2^{1/3}\log d}+\frac{1}{6}\\
    &+\frac{-\log\!\left(2^{1/3}\sqrt{\pi}\,
    \operatorname{Ai}'(-2^{1/3}a_1)(e^{c/2}-2)\right)
    +\frac{c-2}{e^{c/2}-2}+\frac{c}{2}}{\log d}
    +o\!\left((\log d)^{-1}\right).
\end{aligned}
\end{equation}
The comparison between the numerical solution and the asymptotic result is shown in Fig.~\ref{fig:beta_c}.

Moreover, if one allows $\beta$ to depend on $d$, one can get a lower bound of the form 
\begin{equation}
    \phi_{\mathrm{max}} \geq \phi^* =  \frac{d^{\beta_c(d)}+1}{2^d}\left(1+\frac{c}{d}\right)^{-d} \sim \frac{e^{\frac{c-2}{e^{c/2}-2}-\frac{c}{2}}}{\sqrt[3]{2} \sqrt{\pi } \operatorname{Ai}'(-2^{1/3}a_1)
   \left(e^{c/2}-2\right)}\frac{d^{1/6} e^{a_1 (d/2)^{1/3}}}{2^{[3-\log_2(e)]d/2}} ,
\end{equation}
which coincides with the Torquato--Stillinger lower bound in Eq.~\eqref{eq:lower_bound} up to a $c$-dependent multiplicative constant. This procedure, therefore, allows for obtaining a complete family of polynomially improved lower bounds approaching the Torquato--Stillinger result.

In Fig.~\ref{fig:Sk}, we show the structure factors for different values of $d > \beta_c^{-1}(\beta)$ for selected large dimensions between 33 and 300. These structure factors satisfy the two-point nonnegativity conditions required of disordered packings. Their approach toward unity away from the constrained small-$\kappa$ and contact regimes illustrates decorrelation as the dimension increases.
\begin{figure}
    \centering
    \includegraphics[width=0.5\linewidth]{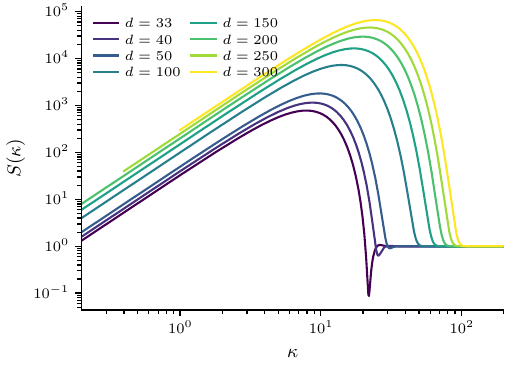}
    \caption{Structure factor for several dimensions, with $\beta=3$ and $c=2$. For these parameters, $\ceil{\beta_c^{-1}(3)}=33$; the figure shows that $S(\kappa)>0$ for $\kappa>0$ starting at $d=33$.}
    \label{fig:Sk}
\end{figure}

\subsection{All other minima of $S(\kappa)$ are nonnegative}
\label{sec:other-minima}

In the aforementioned optimization procedure, we constrained the first positive local minimum at $\kappa = \kappa_{\mathrm{min}}$ in $S(\kappa)$ to be nonnegative. 
We now argue that all the other local minima at $\kappa=\kappa^{(n)}_{\mathrm{min}}$ satisfy $S(\kappa^{(n)}_{\mathrm{min}}) \geq S(\kappa_{\mathrm{min}})$. Although we do not provide a uniform asymptotic treatment of the crossover between the Airy and Debye regimes, the potentially dangerous minima of $S(k)$ are the first few oscillations after the first positive minimum; subsequent minima are progressively suppressed by the decreasing Bessel envelope until the structure factor approaches unity in the limit $\kappa\infty$. Indeed,
in the large-$\kappa$ regime, it scales as
\begin{equation}
    S(\kappa) \sim 1 + \frac{2^{\frac{d-1}{2}} d^{\beta } k^{\frac{1}{2}-\frac{d}{2}} \Gamma \left(\frac{d}{2}\right)}{\sqrt{\pi }} \cos \left(k-\frac{1}{4}
   \pi  (d-1)\right),
\end{equation}
in agreement with the known scaling found in Ref.~\cite{To21c}. Thus, constraining the first positive minimum is sufficient for the asymptotic nonnegativity considered here.

\subsubsection{Airy scaling: $\kappa = d/2 + s (d/2)^{1/3}$, $s>a_1$}

The minimum at $\kappa = \kappa_{\mathrm{min}} \sim d/2 + a_1 (d/2)^{1/3}$ corresponds to the Airy-scaling region of the Bessel functions that enter in the expression for the structure factor [cf. \ref{eq:struc_fact}]. 
Here we consider the next several local minima that are located at $\kappa = d/2 + s (d/2)^{1/3}$ with $s>a_1$.
In this regime, we can expand the Bessel functions as
\begin{equation}
J_{d/2}(\sigma (d/2 + s (d/2)^{1/3})) \sim 2^{1/3}(d/2)^{-1/3}\operatorname{Ai}(-2^{1/3}s) - 2^{-1/3}c\,(d/2)^{-2/3}\operatorname{Ai}'(-2^{1/3}s),
\end{equation}
\begin{equation}
J_{d/2-1}(d/2 + s (d/2)^{1/3}) \sim 2^{1/3}(d/2)^{-1/3}\operatorname{Ai}(-2^{1/3}s) - 2^{2/3}(d/2)^{-2/3}\operatorname{Ai}'(-2^{1/3}s).
\end{equation}
Substituting these expressions in Eq.~\eqref{eq:struc_fact}, expanding at large $d$ and using the expression for $\beta_c$ in Eq.~\eqref{eq:beta_c}, we find
\begin{equation}
\begin{aligned}
 &\left|S\!\left(d/2+s(d/2)^{1/3}\right)-1\right|\\
 &\quad\sim
 \left|\exp\!\left(\frac{d^{1/3}(a_1-s)}{2^{1/3}}\right)
 \frac{d^{1/3}\operatorname{Ai}(-2^{1/3}s)
 e^{(c-2)/(e^{c/2}-2)}}{2^{2/3}\operatorname{Ai}'(-2^{1/3}a_1)}\right|\\
 &\quad\leq
 \left|\exp\!\left(\frac{d^{1/3}(a_1-s)}{2^{1/3}}\right)
 \frac{d^{1/3}\operatorname{Ai}(-2^{1/3}s)}
 {\operatorname{Ai}'(-2^{1/3}a_1)}\right|<1 .
\end{aligned}
\end{equation}
The right-hand side decays as a stretched exponential in $d$ for fixed $s>a_1$, and it is smaller than unity for any $s>a_1$, thus proving that the minima after the first are less deep, guaranteeing that $S(\kappa) \geq 0$ for $\kappa\leq d$.

\subsubsection{Intermediate Airy-Debye scaling: $d^{1/3}\ll \kappa-d/2\ll d$}

Here we provide analysis that shows the positivity of $S(\kappa)$ 
[cf. \ref{eq:struc_fact}] in the intermediate regime, as defined by
\begin{equation}
    \frac{d}{2}+s\left(\frac{d}{2}\right)^{1/3}
    \leq \kappa\leq z\frac{d}{2}
\end{equation}
for fixed $s>a_1$ and $z>1$. The standard uniform oscillatory-region estimate for Bessel functions is
\begin{equation}
    |J_m(x)|
    \leq C(x^2-m^2)^{-1/4},
    \qquad
    x\geq m+sm^{1/3}.
\end{equation}
It follows from the uniform Airy expansion of $J_m(mz)$ and the bound
$|\operatorname{Ai}(-t)|\leq Ct^{-1/4}$. Applied to the Bessel functions in
Eq.~\eqref{eq:struc_fact}, it gives
\begin{equation}
    |J_{d/2}(\sigma\kappa)|
    +|J_{d/2-1}(\kappa)|
    \leq
    C\left(\kappa^2-\frac{d^2}{4}\right)^{-1/4}.
\end{equation}

We consider the limiting case $\beta=\beta_c(d)$, since the coefficients in
$S(\kappa)-1$ decrease in absolute value when $\beta$ decreases. Using
Eq.~\eqref{eq:asymptotic} and Stirling's formula, we obtain
\begin{equation}
    \frac{c_2(d)}{\kappa^{d/2-1}},\,\frac{c_1(d)}{(\sigma\kappa)^{d/2}}
    \sim \left(\frac{d}{2}\right)^{2/3} \exp\!\left[ a_1\left(\frac{d}{2}\right)^{1/3} -\frac{d}{2}\log\!\left(\frac{2\kappa}{d}\right) \right]
\end{equation}
Here, the bounded factors $\sigma^{-d/2}$ and $2\kappa/d$ have been absorbed
into the constants. Substitution into Eq.~\eqref{eq:struc_fact} therefore yields
\begin{equation}
    |S(\kappa)-1| \leq
    C\left(\frac{d}{2}\right)^{2/3}
    \left(\kappa^2-\frac{d^2}{4}\right)^{-1/4}
    \exp\!\left[
        a_1\left(\frac{d}{2}\right)^{1/3}
        -\frac{d}{2}\log\!\left(\frac{2\kappa}{d}\right)
    \right].
\end{equation}
The right-hand side decreases with $\kappa$ and, at
$\kappa=d/2+s (d/2)^{1/3}$, it is
\begin{equation}
    |S(\kappa)-1|
    \leq
    \exp\!\left[
        -(s-a_1)\left(\frac{d}{2}\right)^{1/3}+O(\log d)
    \right].
\end{equation}
Thus, it is uniformly stretched exponentially small throughout the
intermediate regime, and hence $S(\kappa)>0$ for all sufficiently large $d$.

\subsubsection{Debye scaling: $\kappa = z \,d/2$, $z>1$}

We are left with the study of the nonnegativity of the local minima of $S(\kappa)$ for $\kappa>d$ for large $\kappa$. We can expand the Bessel functions in Debye’s regime $\kappa = z d/2$, with $z>1$. In this regime, we have
\begin{equation}
J_{d/2}(zd/2 ) \sim \left(\frac{2}{\pi d/2}\right)^{1/2} \frac{1}{(z^2-1)^{1/4}} \cos\left[ d/2 \left(\sqrt{z^2-1}-\arccos\frac1z\right) -\frac{\pi}{4} \right],
\end{equation}
\begin{equation}
J_{d/2-1}(z d/2) \sim \left(\frac{2}{\pi d/2}\right)^{1/2} \frac{1}{(z^2-1)^{1/4}} \cos\left[ d/2\left(\sqrt{z^2-1}-\arccos\frac1z\right)+\arccos\frac1z-\frac{\pi}{4} \right].
\end{equation}
We can now bound $|S(\kappa)-1|$ as
\begin{equation}
    |S(\kappa)-1| \leq \bigg|- c_1(d) \frac{J_{d/2}(\kappa \sigma)}{(\kappa\sigma)^{d/2}} + c_2(d) \frac{J_{d/2-1}(\kappa)}{\kappa^{d/2-1}}\bigg| \leq \bigg| c_1(d) \frac{J_{d/2}(\kappa \sigma)}{(\kappa\sigma)^{d/2}} \bigg| + \bigg| c_2(d) \frac{J_{d/2-1}(\kappa)}{\kappa^{d/2-1}} \bigg|
\end{equation}
For $\kappa = z d/2$ this becomes, for large $d$
\begin{align}
    |S(\kappa)-1| &\leq  \left(\frac{2}{\pi d/2}\right)^{1/2} \frac{1}{((\sigma z)^2-1)^{1/4}} \frac{c_1(d)}{(d/2 z \sigma)^{d/2}}  +  \left(\frac{2}{\pi d/2}\right)^{1/2} \frac{1}{(z^2-1)^{1/4}} \frac{c_2(d)}{(z d/2)^{d/2-1}} \\
    & = \frac{2^d}{\sqrt{\pi }} \sqrt{\frac{1}{d}} \Gamma \left(\frac{d}{2}+1\right) \left(\frac{2 \left(d^{\beta }+1\right) (z (c+d))^{-d/2}}{\sqrt[4]{\frac{z^2
   (c+d)^2}{d^2}-1}}+\frac{z d^{\beta } (d z)^{-d/2}}{\sqrt[4]{z^2-1}}\right)
   \\
   & \sim \frac{ \left((z+2) d^{\beta }+2\right)}{\sqrt[4]{z^2-1}} \, 2^de^{-\frac{1}{2} d (\log (z)+1+\log (2))}.
\end{align}
The function on the right-hand side decays exponentially with $d$ for any fixed $z$, thus proving that $S(\kappa)$ is exponentially close to unity for large $d$. Together with the Airy-regime analysis above, this shows asymptotically that if $S(\kappa_{\mathrm{min}}) \geq 0$, then $S(\kappa) \geq 0$ for all $\kappa>0$ in the regimes considered.

\section{Alternative Proof of the Torquato-Stillinger Exponential \\Asymptotic Form}
\label{sec:proof}

Samorodnitsky used Delsarte's linear programming approach to derive bounds on the maximal size $A(n, {\cal D})$ of a binary error-correcting
code of length $n$ and distance $\cal D$, or, alternatively, for the best packing of
spheres in $d$-dimensional Hamming space \cite{Sa01}.
Motivated by these results, we derive a lower bound on the Cohn--Elkies LP upper bound that has the same asymptotic form as the Torquato--Stillinger conjectured exponential improvement of Minkowski's lower bound.
The proof can physically be interpreted as a type of ``diffraction-limit" argument. A Cohn–Elkies test function is a positive-definite radial certificate that must become nonpositive outside the hard-core radius. Its low Fourier modes (small wavenumbers), however, are too smooth and, after radial weighting, too monotonic to produce the required sign change. Consequently, the sign change must be generated by the high Fourier modes (large wavenumbers), which create a compensating negative dip near contact. In high dimensions, however, the decay of the Bessel functions strongly suppresses the contribution of these high modes, making them increasingly inefficient. As a result, the necessary high-wavenumber correction prevents the value of the Cohn–Elkies certificate from decaying faster than the Torquato–Stillinger exponential scale.

\begin{theorem} \label{theorem:bound}
Let $d \ge 2$, and let $f:\mathbb R^d\to\mathbb R$ be a radial Schwartz function, and suppose that $f(r) \le 0$ for all $r \ge 1$, $\tilde{f}(q) \ge 0$ for all $q \ge 0$, and $\tilde{f}(0)>0$.
If $Q>0$ is such that for $0 \le q \le Q$, the function $r \mapsto r^{d-1} B_d(rq)$ is weakly increasing on $[0,1]$, then
\begin{equation}
\vol(B_{1/2}^d)\frac{f(0)}{\tilde{f}(0)}
\ge
\frac{(\pi Q)^{d/2-1}}{2^{d+1} d \Gamma(d/2)}.
\end{equation}
\end{theorem}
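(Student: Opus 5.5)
The plan is to reduce the claim to an inequality between $\int_{B_1^d} f$ and $f(0)$, expand both sides in Fourier modes, and handle low and high wavenumbers separately. The monotonicity hypothesis controls the low modes, and crude Bessel bounds control the high modes.

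\emph{Reduction.} Since $\vol(B_{1/2}^d)=\pi^{d/2}/(2^d\Gamma(1+d/2))$ and $d\Gamma(d/2)=2\Gamma(1+d/2)$, the claimed inequality is equivalent to
\[
\tilde f(0)\le 4\pi\, Q^{1-d/2} f(0).
\]
Because $f\le 0$ for $r\ge 1$, we have $\tilde f(0)=\int_{\R^d} f\le \int_{B_1^d} f$. It therefore suffices to show
\[
\int_{B_1^d} f(x)\,\rmd x\le 4\pi Q^{1-d/2}\int_0^\infty \tilde f(q)\,\rmd\nu_d(q)=4\pi Q^{1-d/2}f(0).
\]
By Fourier inversion \eqref{f-driect} and Fubini (legitimate for Schwartz $f$),
\[
\int_{B_1^d} f=\int_0^\infty \tilde f(q)\,\psi(q)\,\rmd\nu_d(q),
\qquad
\psi(q)=\int_0^1 B_d(rq)\,\rmd\nu_d(r)=d v_d\int_0^1 r^{d-1}B_d(rq)\,\rmd r=\frac{J_{d/2}(2\pi q)}{q^{d/2}}.
\]
The last expression is the Fourier transform of the indicator of the unit ball.

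\emph{Low modes via monotonicity.} For $0\le q\le Q$ the integrand $r^{d-1}B_d(rq)$ is weakly increasing on $[0,1]$. Its average over $[0,1]$ is therefore at most its value at $r=1$, so $\psi(q)\le d v_d B_d(q)$. Since $\tilde f\ge0$, this gives
\[
\int_{q\le Q}\tilde f\,\psi\,\rmd\nu_d\le d v_d\int_{q\le Q}\tilde f(q)B_d(q)\,\rmd\nu_d(q).
\]
Next I use the sign condition at the contact radius: $f(1)=\int_0^\infty\tilde f(q)B_d(q)\,\rmd\nu_d(q)\le0$. Hence
\[
\int_{q\le Q}\tilde f\,B_d\,\rmd\nu_d\le \int_{q>Q}\tilde f\,|B_d(q)|\,\rmd\nu_d.
\]
Combining the two pieces,
\[
\int_{B_1^d} f\le\int_{q>Q}\tilde f(q)\bigl[d v_d|B_d(q)|+|\psi(q)|\bigr]\rmd\nu_d(q).
\]
This is the ``diffraction-limit'' mechanism: the sign change at $r=1$ is paid for entirely by the high modes.

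\emph{High modes via Bessel bounds.} It remains to show that the bracket is at most $4\pi q^{1-d/2}\le 4\pi Q^{1-d/2}$ for $q>Q$. Integrating against $\tilde f\ge0$ then bounds the right side by $4\pi Q^{1-d/2}f(0)$, which finishes the proof.

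For the first term, $|J_{d/2-1}|\le1$ (the order is $\ge0$ since $d\ge2$) and $d v_d\Gamma(d/2)=2\pi^{d/2}$ give
\[
d v_d|B_d(q)|\le 2\pi^{d/2}(\pi q)^{1-d/2}=2\pi q^{1-d/2}.
\]
For the second term I need $|J_{d/2}(x)|\le x$ with $x=2\pi q$. If $x\ge1$ this follows from $|J_{d/2}|\le1$. If $x<1$ it follows from $|J_\nu(x)|\le (x/2)^\nu/\Gamma(\nu+1)\le x/2$ for $\nu=d/2\ge1$. This yields $|\psi(q)|\le 2\pi q^{1-d/2}$.

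The main obstacle I anticipate is the low-mode step. One has to notice that the monotonicity hypothesis is exactly what lets the ball average $\psi(q)$ be dominated by the boundary value $B_d(q)$, and that the single constraint $f(1)\le0$ then transfers the low-mode mass to the high modes. The high-mode step is routine. The only care needed is that the uniform Bessel bounds hold for all $d\ge2$ and that the Fubini interchanges are justified for Schwartz $f$.
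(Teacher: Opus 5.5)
Your proof is correct and is essentially the paper's argument. It uses the same cutoff $Q$, the same monotonicity bound of the low-mode contribution on $[0,1]$ by its value at $r=1$, the same transfer via $f(1)\le 0$ to the high modes, and the same $|J_\nu|\le 1$ estimate, and it reaches the identical inequality $\tilde f(0)\le 4\pi Q^{1-d/2}f(0)$. The only difference is bookkeeping: you apply Fubini and work mode by mode with the ball transform $\psi(q)=J_{d/2}(2\pi q)/q^{d/2}$, which is why you need the extra bound $|J_{d/2}(x)|\le x$, whereas the paper keeps the radial density $F=\omega_d r^{d-1}f$ and bounds both $-F_{\ge Q}(1)$ and $\int_0^1 F_{\ge Q}$ by the single quantity $\varepsilon=\max_{0\le r\le 1}|F_{\ge Q}(r)|\le 2\pi Q^{1-d/2}f(0)$.
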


\begin{proof}
We use a cutoff wavenumber $Q$ to decompose $f$ [cf. Eq.~(\ref{f-driect})] into low- and high-wavenumber contributions and then optimize the cutoff.
We therefore write $f = f_{\le Q} + f_{\ge Q}$ where
\begin{equation}
f_{\le Q}(r)
=
\int_0^Q \tilde{f}(q) B_d(rq) \, \rmd\nu_d(q)
\end{equation}
is the low-wavenumber contribution, and
\begin{equation}
f_{\ge Q}(r) = \int_Q^\infty \tilde{f}(q) B_d(rq) \, \rmd\nu_d(q)
\end{equation}
is the high-wavenumber contribution.
The role of $Q$ is to separate those Fourier modes whose weighted radial kernels $r^{d-1}B_d(rq)$ are monotone on the hard-core interval from the complementary modes. Indeed, by assumption, for every $0\le q\le Q$ these kernels are weakly increasing on $[0,1]$. Since $\tilde f(q)\ge 0$, the low-wavenumber radial density associated with $f_{\le Q}$ is therefore monotone on $[0,1]$. Consequently, the low-wavenumber part cannot by itself create the negative dip required by the sign condition $f(1)\le 0$; the necessary cancellation at contact must come from the high-wavenumber component $f_{\ge Q}$.
Let $F$, $F_{\le Q}$, and $F_{\ge Q}$ be the corresponding radial density functions
\begin{equation}
F(r)=\omega_d r^{d-1}f(r),
\qquad
F_{\le Q}(r)=\omega_d r^{d-1}f_{\le Q}(r),
\qquad
F_{\ge Q}(r)=\omega_d r^{d-1}f_{\ge Q}(r),
\end{equation}
where $\omega_d = d\,\vol(B_1^d)$ is the surface area of $S^{d-1}$ (so $\rmd\nu_d(r) = \omega_d r^{d-1} \, \rmd r$).

The strategy behind the proof is to use the nonnegativity of $\tilde{f}(q)$ and the monotonic increase of the kernels $r^{d-1} B_d(rq)$ to show that $F_{\le Q}(r)$ is monotonically increasing in $[0,1]$. 
Moreover, the sign condition $f(r) \le 0$ for all $r \ge 1$ forces $f(1)\le 0$, and consequently $F_{\ge Q}(r)$ must create a negative dip near $r=1$ large enough to cancel the low-wavenumber contribution from $F_{\le Q}(r)$. In particular, $|F_{\ge Q}|$ must be at least $\tilde{f}(0)/2$ somewhere in $[0,1]$: see Fig.~\ref{fig:wavenumber_fig} for an illustration.
\begin{figure}
    \centering
    \includegraphics[width=0.45\linewidth]{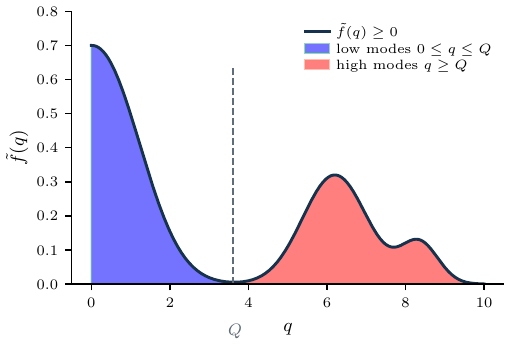}
    \includegraphics[width=0.45\linewidth]{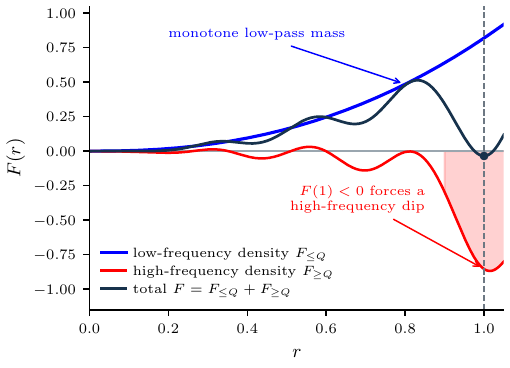}
    \caption{(Left) Illustration of the Fourier transform of $f$, $\tilde{f}(q) \ge 0$, split into low ($0 \le q \le Q$, blue) and high modes ($q \ge Q$, red). (Right) Plot of the radial density $F(r)$, split into low-wavenumber (blue) and high-wavenumber (red) contributions, such that their sum $F = F_{\le Q} + F_{\ge Q}$ is negative at $r=1$.}
    \label{fig:wavenumber_fig}
\end{figure}
The intuitive picture is that low and high modes have positive spectral masses, as $\tilde{f}(q)\ge0$; consequently, the cancellation at $r=1$ comes from oscillation of the Bessel kernel.
Combining this lower bound with a straightforward upper bound for $F_{\ge Q}$ in terms of $f(0)$ will yield the desired result.

The monotonicity of $F_{\le Q}$ on $[0,1]$ follows from the hypothesis on $Q$ and the inequality $\tilde{f}(q)\ge0$ for $0\leq q\leq Q$. Indeed,
\begin{equation}
F_{\le Q}(r)
=
\omega_d r^{d-1}
\int_0^Q \tilde{f}(q)B_d(rq)\,\rmd\nu_d(q),
\end{equation}
and each function $r \mapsto r^{d-1}B_d(rq)$ appearing in the integral is weakly increasing on $[0,1]$.

We now show that $|F_{\ge Q}|$ must be large somewhere in $[0,1]$, specifically at least $\tilde{f}(0)/2$. Since $f(1)\le0$, we have
\begin{equation}
F_{\le Q}(1) + F_{\ge Q}(1) = F(1)\le0.
\end{equation}
We have $F_{\le Q}(1) \ge F_{\le Q}(0)=0$ by monotonicity, and hence $F_{\ge Q}(1) \le 0$.
Let
\begin{equation}
\varepsilon=\max_{0\le r\le1}|F_{\ge Q}(r)|.
\end{equation}
Then
\begin{equation}
F_{\le Q}(1)\le -F_{\ge Q}(1)\le \varepsilon,
\end{equation}
and since $F_{\le Q}$ is weakly increasing, it follows that
\begin{equation}
F_{\le Q}(r)\le \varepsilon
\qquad\text{for }0\le r\le1.
\end{equation}
Thus,
\begin{equation}
F(r)=F_{\le Q}(r)+F_{\ge Q}(r)\le2\varepsilon
\qquad\text{for }0\le r\le1.
\end{equation}
However, the inequality $f(r) \le 0$ for $r \ge 1$ implies that
\begin{equation}
\tilde{f}(0) = \int_0^\infty F(r) \, \rmd r \le \int_0^1 F(r) \, \rmd r \le 2 \varepsilon.
\end{equation}
It follows that $\varepsilon \ge \tilde{f}(0)/2$, i.e.,
\begin{equation}
\max_{0\le r\le1}|F_{\ge Q}(r)|
\ge
\frac{\tilde{f}(0)}2.
\end{equation}

The next step is to bound $F_{\ge Q}$ from above. Set
\begin{equation}
K_Q(r)=\max_{q\ge Q}|B_d(rq)|.
\end{equation}
Then for $0 <r\le1$,
\begin{equation}
|f_{\ge Q}(r)|
\le
\int_Q^\infty \tilde{f}(q)|B_d(rq)|\,\rmd\nu_d(q)
\le
K_Q(r)\int_Q^\infty \tilde{f}(q)\,\rmd\nu_d(q)
\end{equation}
since $\tilde{f}(q) \ge 0$.
We have
\begin{equation}
f(0)
=
\int_0^\infty \tilde{f}(q)\,\rmd\nu_d(q),
\end{equation}
and hence
\begin{equation}
\int_Q^\infty \tilde{f}(q)\,\rmd\nu_d(q)\le f(0).
\end{equation}
Therefore
\begin{equation}
|F_{\ge Q}(r)|
=
\omega_d r^{d-1}|f_{\ge Q}(r)|
\le
\omega_d r^{d-1}K_Q(r)f(0).
\end{equation}
Taking the supremum over $0< r\le1$, we obtain
\begin{equation}
\frac{\tilde{f}(0)}2
\le
f(0)\sup_{0< r\le1}\omega_d r^{d-1}K_Q(r).
\end{equation}

To bound $K_Q(r)$ from above, we write
\begin{equation}
K_Q(r)= \Gamma(d/2)\max_{q\ge Q} \frac{|J_{d/2-1}(2\pi rq)|}{(\pi rq)^{d/2-1}}
\end{equation}
and use the inequality $|J_\nu(x)| \le 1$ for $\nu \ge 0$ and $x \in \R$ (see equation~(10) in \S13.42 of \emph{A Treatise on the Theory of Bessel Functions} by G.~N.~Watson, second edition) to obtain
\begin{equation}
\label{eq:upperbound_Kr}
K_Q(r) \le \frac{\Gamma(d/2)}{(\pi r Q)^{d/2-1}}.
\end{equation}
Figure~\ref{fig:bessel_kernel} compares $|B_d(rq)|$ for $q/Q=1$, $1.5$, and $2$ with the uniform upper bound in Eq.~\eqref{eq:upperbound_Kr}; the logarithmic vertical scale displays the increasingly strong suppression of the oscillatory kernel as either $r$ or $q$ increases.
Thus,
\begin{equation}
\frac{\tilde{f}(0)}2
\le
f(0)\sup_{0< r\le1} \frac{\omega_d \Gamma(d/2) r^{d/2}}{(\pi Q)^{d/2-1}} = f(0) \frac{\omega_d \Gamma(d/2)}{(\pi Q)^{d/2-1}}.
\end{equation}
Finally, we obtain the conclusion of the theorem by using $\vol(B^d_{1/2}) = \vol(B^d_1)/2^d$ and $\omega_d = d\,\vol(B_1^d)$.
\end{proof}
\begin{figure}
    \centering
    \includegraphics[width=0.5\linewidth]{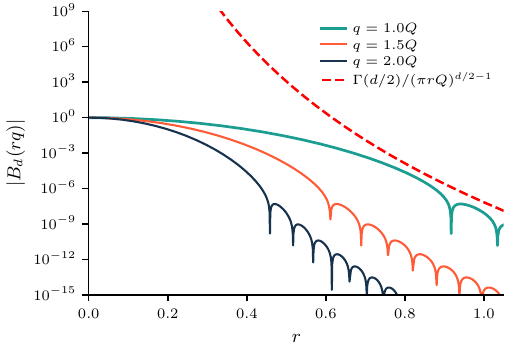}
    \caption{Comparison between $|B_d(r q)|$ for $d=70$ and $Q=7$ for different values of $q$ and the upper bound in Eq.~\eqref{eq:upperbound_Kr}.}
    \label{fig:bessel_kernel}
\end{figure}

Using a stronger bound than $|J_\nu(x)| \le 1$ at the end of the proof would slightly improve the bound, but only by subexponential factors.

Let $j_{\nu,1}$ be the first positive root of the Bessel function $J_\nu$. It is a standard fact that $j_{\nu,1}$ is a strictly increasing function of $\nu$ for $\nu > -1$.

\begin{lemma}
\label{th:lemma}
If $d \ge 3$, then the function
\begin{equation}
r\mapsto r^{d-1}B_d(rq)
\end{equation}
is weakly increasing on $[0,1]$ whenever
\begin{equation}
0\le q\le \frac{j_{d/2-2,1}}{2\pi}.
\end{equation}
\end{lemma}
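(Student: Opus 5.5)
Set $\nu=d/2-1\ge 1/2$ and use the Bessel form of $B_d$ to get, for $r>0$,
\begin{equation*}
r^{d-1}B_d(rq)=\Gamma(d/2)\,\pi^{-\nu}q^{-\nu}\,r^{d/2}J_{\nu}(2\pi rq).
\end{equation*}
The case $q=0$ is trivial, since then the function is $r^{d-1}$. For $q>0$ the positive factor $\Gamma(d/2)\pi^{-\nu}q^{-\nu}$ plays no role. After the substitution $x=2\pi rq$ it is enough to prove that
\begin{equation*}
x\mapsto x^{\nu+1}J_\nu(x)
\end{equation*}
is weakly increasing on $[0,2\pi q]$. The hypothesis $q\le j_{d/2-2,1}/(2\pi)$ means exactly that this interval is contained in $[0,j_{\nu-1,1}]$.

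The key step is the standard derivative identity
\begin{equation*}
\frac{\rmd}{\rmd x}\bigl(x^{\nu}J_\nu(x)\bigr)=x^{\nu}J_{\nu-1}(x).
\end{equation*}
Write $x^{\nu+1}J_\nu(x)=x\cdot x^\nu J_\nu(x)$ and apply the product rule. This gives
\begin{equation*}
\frac{\rmd}{\rmd x}\bigl(x^{\nu+1}J_\nu(x)\bigr)=x^{\nu}J_\nu(x)+x^{\nu+1}J_{\nu-1}(x).
\end{equation*}
The second term is nonnegative on $[0,j_{\nu-1,1}]$. This holds because $\nu-1=d/2-2\ge -1/2>-1$, so $J_{\nu-1}$ is positive on $(0,j_{\nu-1,1})$: it starts positive from its power-series behavior near $0$ and has no zero before its first positive root.

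The first term needs $J_\nu\ge0$ on the same interval. This follows from the stated fact that $j_{\nu,1}$ is strictly increasing in $\nu$ for $\nu>-1$. That fact gives $j_{\nu,1}>j_{\nu-1,1}$, so $J_\nu>0$ on $(0,j_{\nu-1,1}]$. Alternatively, it can be read off from the derivative identity itself: $x^\nu J_\nu$ vanishes at $0$ and has nonnegative derivative up to $j_{\nu-1,1}$. This second argument is self-contained and avoids citing monotonicity of the zeros.

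With both terms nonnegative, the derivative is nonnegative on the interval. Undoing the substitution then gives weak monotonicity of $r\mapsto r^{d-1}B_d(rq)$ on $[0,1]$, including the endpoint $r=0$ by continuity.

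I expect no serious obstacle. The one point to handle with care is $d=3$, where $\nu-1=-1/2$ and $J_{-1/2}(x)=\sqrt{2/(\pi x)}\cos x$ has first zero at $\pi/2$. The positivity argument above still covers this case, since $-1/2>-1$. The remaining check is that the derivative identity holds for all real $\nu$, which is standard.
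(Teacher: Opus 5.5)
Your proof is correct and follows the paper's argument essentially step for step. You use the same reduction to $x\mapsto x^{\nu+1}J_\nu(x)$ on $[0,j_{\nu-1,1}]$, the same derivative $x^\nu\bigl(xJ_{\nu-1}(x)+J_\nu(x)\bigr)$, and the same positivity of both Bessel factors via $j_{\nu-1,1}<j_{\nu,1}$. Your alternative justification of $J_\nu>0$ from $\frac{\rmd}{\rmd x}(x^\nu J_\nu)=x^\nu J_{\nu-1}$ (valid since $\nu\ge 1/2>0$ makes $x^\nu J_\nu$ vanish at $0$) is a nice self-contained touch that the paper replaces by citing the monotonicity of the first zero in $\nu$.
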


\begin{proof}
The case $q=0$ is immediate, since 
$r^{d-1}B_d(0)=r^{d-1}$, so we assume $q>0$.

Let $\nu = d/2 - 1$ and $x = 2 \pi q r$. Then $r^{d-1}B_d(rq)$
is equal to
$x^{\nu+1}J_\nu(x)$ times a positive factor that depends only on $d$ and $q$.
Thus, it suffices to prove that
\begin{equation}
x\mapsto x^{\nu+1}J_\nu(x)
\end{equation}
is increasing for
\begin{equation}
0\le x\le j_{\nu-1,1}.
\end{equation}

Differentiating gives
\begin{equation}
\frac{\rmd}{\rmd x}\left(x^{\nu+1}J_\nu(x)\right)
=
x^\nu(xJ_{\nu-1}(x)+J_\nu(x)).
\end{equation}
Now $J_{\nu-1}(x)>0$ for $0 < x < j_{\nu-1,1}$, and $J_{\nu}(x)>0$ for $0 < x < j_{\nu,1}$. Because $\nu-1 > -1$, we have $j_{\nu-1,1} < j_{\nu,1}$ as noted above, and thus
\begin{equation}
\frac{\rmd}{\rmd x}\left(x^{\nu+1}J_\nu(x)\right) > 0 \quad \text{for $0 < x < j_{\nu-1,1}$},
\end{equation}
as desired.
\end{proof}

We can now choose the largest cutoff allowed by Lemma~\ref{th:lemma}, namely
$Q_*= \frac{j_{d/2-2,1}}{2\pi}$.
Substituting this value of $Q$ into Theorem~\ref{theorem:bound} gives, for every admissible radial Cohn–Elkies test function $f$,
\begin{equation}
    \operatorname{vol}(B^d_{1/2})\frac{f(0)}{\tilde f(0)} \ge L_d:= \frac{\left(j_{d/2-2,1}/2\right)^{d/2-1}} {2^{d+1}d\,\Gamma(d/2)} .
\end{equation}
Thus, $L_d$ is a lower bound on the value of the Cohn–Elkies upper-bound functional itself: no admissible radial dual certificate can certify an upper bound asymptotically smaller than $L_d$.
Using the standard first-zero asymptotic
$j_{\nu,1}=\nu+a_1\nu^{1/3}+O(\nu^{-1/3})$, $a_1\simeq 1.8557571$, together with Stirling’s formula, we obtain
\begin{equation}
    L_d = \exp\!\left(O(d^{1/3})\right) 2^{-\frac{3-\log_2 e}{2}d} = \exp\!\left(O(d^{1/3})\right) 2^{-0.7786524795\,d}.
\end{equation}
This is precisely the exponential rate appearing in the Torquato–Stillinger conjectural lower bound. The present argument should therefore be interpreted as a dual obstruction: the radial Cohn–Elkies LP cannot rule out packings at the Torquato–Stillinger exponential scale.

\section{Robustness of the Torquato-Stillinger ``Universality" Class}
\label{sec:universality}

Evidence is mounting that the Torquato-Stillinger exponential rate with exponent
$0.7786524795\ldots$ is ``universal" in the sense that it can be obtained 
using completely different methods from the LP formulations \cite{Ed25}, 
and as well as the LP lower bound using different test $g_2$ functions (including relaxation of the hyperuniformity constraint) \cite{Sc08}. The present paper reports
that the same exponential improvement in the lower bound can be approached in the high-$d$
limit using polynomial improvements. We also showed that the Cohn–Elkies dual linear programming upper bound formulation cannot asymptotically exclude
packings with the Torquato–Stillinger density scalings.

To explore further the robustness of the $0.7786524795\ldots$, we examined the use of  a broad
class of other radial pair-correlation test functions $g_2(r)$  in the Torquato--Stillinger lower-bound program for high-dimensional sphere packings, including the pure hard-core step function, a step function with a contact delta function, the original step-plus-delta-plus-gap ansatz, additional positive delta shells, partially filled gaps, smooth near-contact peaks, finite monotone staircases, positive continuum mixtures of steps, shrinking-scale perturbations $r-1=O(d^{-\gamma})$, and finite nonmonotone sequences of depleted and enhanced radial bands. Although several of these families improve finite-dimensional packing fractions or their polynomial and stretched-exponential prefactors, the analysis shows that none improves the leading Torquato--Stillinger scaling
\[
\phi_*
\sim
2^{-\alpha_{\mathrm{TS}}d+o(d)},
\qquad
\alpha_{\mathrm{TS}}
=
\frac{3-\log_2 e}{2}
=
0.7786524795\ldots .
\]
The reason is a universal feature of the high-dimensional radial Fourier transform. The structure-factor condition $S(k)\geq 0$ is controlled by Bessel functions whose order is approximately $d/2$. For every fixed-complexity test function whose correlations remain confined to a shrinking neighborhood of contact, the first restrictive minimum of $S(k)$ remains near the Bessel turning point
\[
k=\frac{d}{2}+O(d^{1/3}),
\]
where the Bessel functions are described by Airy-function asymptotics. These asymptotics fix the coefficient of $d$ in $\log\phi_*$, while changes in gap widths, contact numbers, shell positions, or finitely many radial amplitudes affect only algebraic factors or stretched-exponential corrections such as $\exp[O(d^{1/3})]$. Positive outer shells face an additional low-wavenumber moment obstruction: the simultaneous requirements of hyperuniformity and $S''(0)\geq 0$ restrict the density to the familiar $O(d\,2^{-d})$ scale unless $g_2(r)$ contains a genuinely depleted region with $g_2(r)<1$. Positive staircases can cancel the leading Airy contribution, but positivity and convexity prevent them from cancelling the next independent Airy mode. Signed, nonmonotone finite-band functions remove this particular convexity obstruction, yet cancelling any fixed number of Airy terms still leaves the same underlying exponential envelope. To escape the Torquato--Stillinger universality class, the active Fourier constraint would have to be displaced from the turning-point region into a \emph{Debye-frequency interval}, by which we mean a finite range of wavenumbers
\[
k=z\,\frac{d}{2},
\qquad
1+\epsilon\leq z\leq z_*,
\]
where $\epsilon>0$ and $z_*>1$ remain fixed as $d\to\infty$. Physically, this is a band of wavelengths whose wavenumbers scale linearly with dimension and lie a finite fractional distance above the turning point, rather than only $O(d^{1/3})$ above it. In this regime, the Bessel kernels are described by oscillatory Debye asymptotics, and moving the first active minimum to a fixed $z_*>1$ would change the coefficient multiplying $d$ and hence genuinely improve the packing exponent. Achieving such a displacement appears to require several independently tunable depleted and enhanced radial bands that grow with dimension, so that the test function suppresses an entire $O(d)$-wide Fourier interval rather than merely shifting or cancelling a finite set of turning-point minima.

Thus, the Torquato–Stillinger universality class is remarkably robust across a wide range of radial test pair-correlation functions. Escaping the TS universality class consequently requires a substantial qualitative change in the possible radial test functions: the number of independently adjustable radial features must grow with dimension so that the pair correlation suppresses an entire Debye-frequency interval $k=zd/2$, with $z>1$ fixed, rather than merely perturbing the neighborhood of the first turning-point minimum.

\section{Pair Correlations Improving on the Torquato--Stillinger Exponential Rate}
\label{sec:radau-construction}

We now describe an explicit family of radial test pair correlations
whose terminal packing fractions have exponent
$0.622556248918\ldots$, improving on the
Torquato--Stillinger exponent $0.7786524795\ldots$.
The construction uses Gauss--Radau quadrature and a number of positive
radial shells that grows linearly with the dimension.

Let $d\geq5$, put $\nu=d/2$, and set
\begin{equation}
    m=m_d=\left\lfloor\frac{d-1}{4}\right\rfloor.
    \label{eq:radau-order}
\end{equation}
Consider the probability measure
\begin{equation}
    \rmd\mu_\nu(t)=\nu t^{\nu-1}\rmd t,
    \qquad 0\leq t\leq1.
\end{equation}
Let $0<t_1<\cdots<t_m<1$ be the zeros of
\begin{equation}
    P_m^{(0,\nu)}(2t-1),
\end{equation}
where $P_m^{(0,\nu)}$ is the Jacobi polynomial.  The associated left
Gauss--Radau weights $q_0,q_1,\ldots,q_m$ are the unique positive numbers for
which
\begin{equation}
    \int_0^1p(t)\rmd\mu_\nu(t)
    =q_0p(0)+\sum_{j=1}^m q_jp(t_j)
    \label{eq:radau-quadrature}
\end{equation}
for every polynomial $p$ of degree at most $2m$.  Thus the construction is
completely explicit in terms of Jacobi-polynomial zeros.  More concretely, if
$u_0=0$, $u_j=t_j$, and
\begin{equation}
    \ell_i(t)=\prod_{\substack{0\leq l\leq m\\l\neq i}}
    \frac{t-u_l}{u_i-u_l},
\end{equation}
then
\begin{equation}
    q_i=\nu\int_0^1\ell_i(t)^2t^{\nu-1}\rmd t,
    \qquad
    q_0=\binom{\nu+m}{m}^{-2}.
    \label{eq:radau-weights}
\end{equation}

Define
\begin{equation}
    \lambda=q_0^{-1},\qquad
    \sigma=t_1^{-1/2},\qquad
    R_j=\sqrt{\frac{t_j}{t_1}},\qquad
    Z_j=\lambda q_j,
    \qquad
    \rho=\frac{\lambda}{v_d\sigma^d}.
    \label{eq:radau-parameters}
\end{equation}
The radial pair-correlation function is
\begin{equation}
    g_2(r)
    =\Theta(r-\sigma)
    +\frac{1}{\rho\omega_d}\sum_{j=1}^m
    \frac{Z_j}{R_j^{d-1}}\delta(r-R_j).
    \label{eq:radau-pair-measure}
\end{equation}
Here $R_1=1$, while $1<R_j<\sigma$ for $j\geq2$, so $g_2(r)$ satisfies the
hard-core condition.

To write its structure factor, define
\begin{equation}
    B_\nu(x)=\Gamma(\nu)\left(\frac{2}{x}\right)^{\nu-1}J_{\nu-1}(x),
    \qquad
    A_\nu(x)=\Gamma(\nu+1)\left(\frac{2}{x}\right)^\nu J_\nu(x),
\end{equation}
with their values at zero defined by continuity.  The structure factor
associated with Eq.~\eqref{eq:radau-pair-measure} is
\begin{equation}
    S(k)
    =\lambda\left[
      q_0+\sum_{j=1}^m q_jB_\nu(k\sigma\sqrt{t_j})
      -A_\nu(k\sigma)\right].
    \label{eq:radau-structure-factor}
\end{equation}

\begin{proposition}
\label{prop:radau-shell-family}
For every $d\geq5$, the nonnegative radial pair-correlation function
$g_2(r)$ in Eq.~\eqref{eq:radau-pair-measure} satisfies the hard-core
condition, and its structure factor is nonnegative for all wavenumbers and is
hyperuniform:
\begin{equation}
    S(k)\geq0\quad(k\geq0),
    \qquad S(0)=0.
    \label{eq:radau-structure-positivity}
\end{equation}
Moreover, $g_2(r)=1$ for $r\geq\sigma$.  Thus, the total correlation function
$h(r)=g_2(r)-1$ has compact support, so the pair correlations vanish
identically beyond a finite distance.  Finally, the
terminal packing fraction is
\begin{equation}
    \phi_{*,d}
    =\frac{\rho v_d}{2^d}
    =2^{-2\nu}\binom{\nu+m}{m}^2t_1^\nu
    =\left(\frac{3\sqrt{3}}{8}+o(1)\right)^d
    =2^{-(0.622556248918\ldots+o(1))d}.
    \label{eq:radau-density-rate}
\end{equation}
\end{proposition}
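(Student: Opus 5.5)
The plan is to recognize $S(k)/\lambda$ as the error of the Gauss--Radau rule \eqref{eq:radau-quadrature} applied to one explicit function of $t$. Then positivity of $S$ becomes a sign statement about a quadrature remainder, and hyperuniformity becomes exactness on constants.

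\textbf{Elementary parts.} Since $0<t_1<t_j<1$, we have $R_1=1$, $R_j=\sqrt{t_j/t_1}\in(1,\sigma)$ and $\sigma=t_1^{-1/2}>1$. Hence $g_2$ vanishes on $[0,1)$, is a nonnegative measure (because $q_j>0$), and equals $1$ for $r\ge\sigma$, so $h$ has compact support. Next I would check \eqref{eq:radau-structure-factor} directly from $S=1+\rho\tilde h$. The shell at $R_j$ contributes $\rho\cdot\frac{Z_j}{\rho}B_\nu(kR_j)=\lambda q_jB_\nu(k\sigma\sqrt{t_j})$, using $\sigma\sqrt{t_j}=R_j$. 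The hole $-\mathbf 1_{r<\sigma}$ contributes $-\rho v_d\sigma^dA_\nu(k\sigma)=-\lambda A_\nu(k\sigma)$. The constant term is $1=\lambda q_0$.

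\textbf{Reduction to a quadrature remainder.} Set $x=k\sigma$ and $F_x(t)=B_\nu(x\sqrt t)={}_0F_1(;\nu;-x^2t/4)$. The measure $\mu_\nu$ is exactly the law of $|u|^2$ for $u$ uniform in the unit ball of $\R^{2\nu}$. Therefore
\begin{equation}
A_\nu(x)=\int_0^1F_x(t)\,\rmd\mu_\nu(t),
\qquad
\frac{S(k)}{\lambda}=q_0F_x(0)+\sum_{j=1}^mq_jF_x(t_j)-\int_0^1F_x\,\rmd\mu_\nu .
\end{equation}
At $k=0$ we have $F_0\equiv1$, and exactness on constants gives $S(0)=0$. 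For $k>0$, I would write the left Gauss--Radau remainder via the Peano kernel:
\begin{equation}
\int_0^1F_x\,\rmd\mu_\nu-Q[F_x]=\int_0^1K_m(\tau)\,F_x^{(2m+1)}(\tau)\,\rmd\tau .
\end{equation}
For a rule with the fixed node $0$ and $m$ free nodes, the kernel $K_m$ is nonnegative. Termwise differentiation of the ${}_0F_1$ series gives
\begin{equation}
F_x^{(n)}(\tau)=\Bigl(-\frac{x^2}{4}\Bigr)^n\frac{\Gamma(\nu)}{\Gamma(\nu+n)}\,{}_0F_1\Bigl(;\nu+n;-\frac{x^2\tau}{4}\Bigr).
\end{equation}
With $n=2m+1$ odd, the claim $S\ge0$ becomes
\begin{equation}
\int_0^1K_m(\tau)\,\Lambda_{\nu+2m}\bigl(x\sqrt\tau\bigr)\,\rmd\tau\ \ge\ 0\qquad(x>0),
\end{equation}
where $\Lambda_\mu(y)=\Gamma(\mu+1)(2/y)^\mu J_\mu(y)$.

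\textbf{Main obstacle.} $\Lambda_{\nu+2m}$ itself changes sign, so pointwise positivity of $K_m$ is not enough. The integral has to be shown to be a radial Fourier transform that is positive definite in $\R^{2\nu+4m+2}$. My first attempt would use the Christoffel--Darboux structure: $K_m$ is built from $t\,\omega_m(t)^2$ with $\omega_m(t)=\prod_{j}(t-t_j)$, times $t^{\nu-1}$. I would try to rewrite $K_m(\tau)\,\rmd\tau$, after the substitution $\tau=s^2$, as a Sonine/Gegenbauer-type weight $s^{2\mu+1}(1-s^2)^{\alpha}$ times a positive factor. Integrals of that form are Askey-type truncated-power transforms of $\Lambda$, and their nonnegativity is classical once $\alpha$ is at least about half the dimension. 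I expect this threshold to be exactly where the choice $2m+1\le\nu+\tfrac12$, i.e.\ $m=\lfloor(d-1)/4\rfloor$, enters. If that identification fails, my fallback is different: represent $F_x$ as a positive mixture of $\cos(a\sqrt t)$ via Poisson's integral, and prove the remainder sign for each cosine separately, again via an Askey-type positivity result.

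\textbf{Density asymptotics.} Directly, $\phi_{*,d}=\rho v_d/2^d=\lambda/(2\sigma)^d=2^{-2\nu}\binom{\nu+m}{m}^2t_1^\nu$. Stirling with $m/\nu\to\tfrac12$ gives $\binom{\nu+m}{m}^{2}=(27/4)^{\nu+o(\nu)}$. The remaining input is $t_1\to\tfrac14$: the smallest zero of $P_m^{(0,\nu)}(2t-1)$ with $\nu/m\to2$ converges to the left edge of the limiting zero distribution for Jacobi polynomials with linearly varying parameters (Moak--Saff--Varga). Only $t_1=\tfrac14+o(1)$ is needed at exponential scale, giving $(27/64)^{\nu+o(\nu)}=(3\sqrt3/8+o(1))^d$. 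Finally, $-\log_2(3\sqrt3/8)=0.622556\ldots$.
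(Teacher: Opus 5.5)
The paper defers the proof of this proposition to follow-up work, so I can only assess your argument on its own terms.

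\textbf{What is correct.} The following parts are right:
\begin{itemize}
\item the hard-core and flat-tail checks, the formula for $S$, and $S(0)=0$ from exactness on constants;
\item the identity $S(k)/\lambda=Q[F_x]-\int F_x\,\rmd\mu_\nu$ with $x=k\sigma$, and the Peano representation with $K_m\ge0$;
\item the density asymptotics. For $t_1\to\tfrac14$ you need convergence of the smallest zero itself. Weak convergence of the zero distribution only gives $t_1\le\tfrac14+o(1)$.
\end{itemize}

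\textbf{The gap.} $S(k)\ge0$ is the substance of the proposition, and it is never proved. You stop at the reduction and offer two strategies, and neither works as described.

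The cosine fallback is false. For $F=\cos(a\sqrt t)$ the remainder takes the wrong sign. For example, take $m=1$ and let $a\to\infty$ along $a\sqrt{t_1}\equiv\pi\pmod{2\pi}$. The rule returns $q_0-q_1=2q_0-1<0$, while $\int\cos(a\sqrt t)\,\rmd\mu_\nu(t)\to0$. Positivity of $S$ rests on the decay of $B_\nu$ (as $k\to\infty$, $S/\lambda\to q_0$). A decomposition into dimension-free cosines throws that decay away.

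The first route targets the wrong mechanism. $K_m$ is a spline with knots at the $t_j$. For $\tau>t_m$ it equals $\frac{1}{(2m)!}\int_\tau^1(t-\tau)^{2m}\,\rmd\mu_\nu(t)\asymp(1-\tau)^{2m+1}$. That is far too rough at $\tau=1$ to be a positive mixture of Askey-admissible truncated powers in $\R^D$, $D=d+4m+2$ (the dimension in which $\Lambda_{\nu+2m}$ is the Fourier kernel). Those need exponent at least $(D+1)/2\ge4m+2$. What actually produces positivity is the fixed node at $t=0$, visible as $K_m(\tau)\approx q_0\tau^{2m}/(2m)!$ near $\tau=0$.

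\textbf{The missing idea: factor that node out.} Since $q_0+\sum_j q_j=1$, write $\tilde g(t)=(g(t)-g(0))/t$. Then
\[
\int_0^1 g\,\rmd\mu_\nu-Q[g]=\int_0^1\tilde g(t)\,t\,\rmd\mu_\nu(t)-\sum_{j=1}^m q_jt_j\,\tilde g(t_j),
\]
which is the error of the $m$-point Gauss rule for $t\,\rmd\mu_\nu$ (with the same nodes $t_j$). Its Peano kernel $K^G$ of order $2m$ is nonnegative, and $\tilde g^{(2m)}(\tau)=\int_0^1 g^{(2m+1)}(\tau s)\,s^{2m}\,\rmd s$.

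Equivalently, $K_m(u)=u^{2m}\int_u^1K^G(\tau)\tau^{-2m-1}\,\rmd\tau$. This is the decomposition you were looking for, but into truncated powers $u^{2m}\mathbf 1_{u<\tau}$ with exponent $0$, not a large one.

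Inserting your formula for $F_x^{(2m+1)}$ gives
\[
\frac{S(k)}{\lambda}=2^{\nu-2m-1}\Gamma(\nu)\int_0^1K^G(\tau)\,\tau^{-2m-1}\,L\bigl(k\sigma\sqrt{\tau}\bigr)\,\rmd\tau,
\qquad
L(y)=\int_0^y w^{2m+1-\nu}J_{\nu+2m}(w)\,\rmd w .
\]
So $S\ge0$ reduces to nonnegativity of a Lommel integral. The exponent $2m+1-\nu$ is $\le\tfrac12$ exactly when $d\ge4m+1$. Your guess about where the threshold enters is therefore right, but for this reason.

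\textbf{Finishing the Lommel bound.}
\begin{itemize}
\item For $d\not\equiv1\pmod 4$ the exponent is $\le0$. Then $L\ge0$ follows from the classical $\int_0^yJ_\mu\ge0$ ($\mu>-1$), by integrating by parts against the decreasing factor $w^{2m+1-\nu}$.
\item For $d\equiv1\pmod 4$ the exponent is exactly $\tfrac12$ and the order is $\mu=4m+\tfrac12$. Then $L$ is a partial integral of a Riccati--Bessel function. Its nonnegativity is the classical sign result for $\int_0^y w^{1/2}J_\mu(w)\,\rmd w$ with $\mu>\tfrac12$: the successive lobe areas decrease, by a Sturm-comparison argument.
\end{itemize}
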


The condition needed for the structure-factor sign is $d\geq4m+1$, which is satisfied by the choice in
Eq.~\eqref{eq:radau-order}.
Figure~\ref{fig:radau-pair-structure} illustrates the pair correlations and
their corresponding structure factors in three representative dimensions.
\begin{figure}[!ht]
    \centering
    \includegraphics[width=\linewidth]{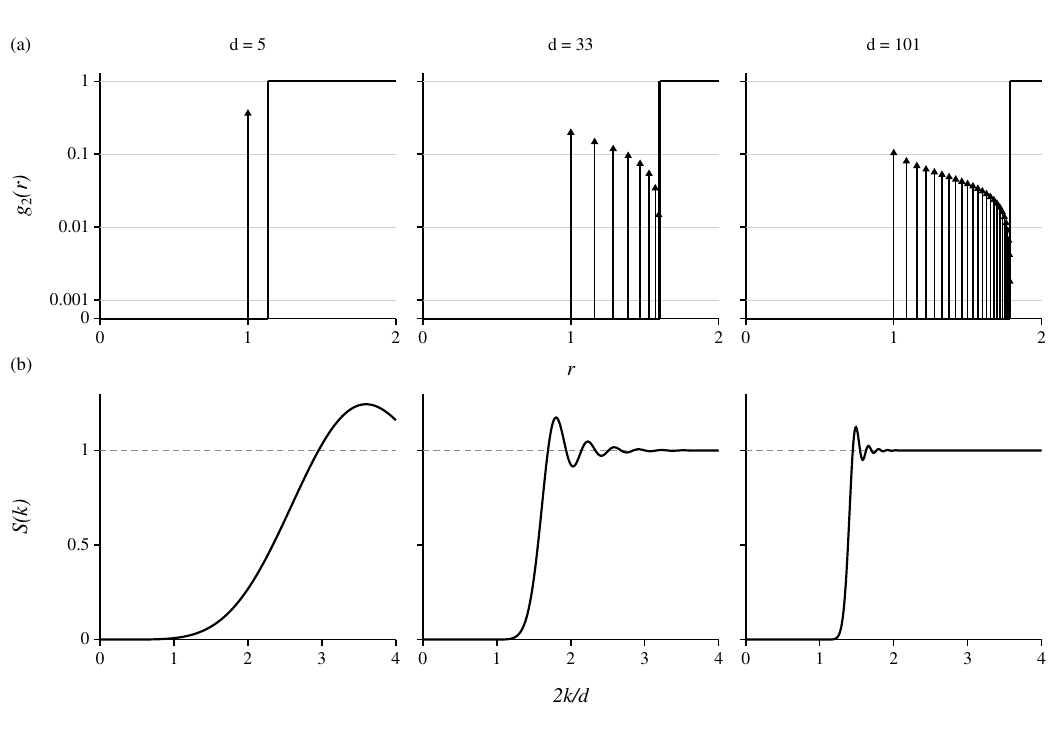}
    \caption{Radau pair-correlation functions $g_2(r)$ (top) and corresponding
    structure factors $S(k)$ (bottom) for $d=5,33,101$, ordered from left to
    right.  The delta functions in the top row are drawn as arrows whose
    heights equal their coefficients in Eq.~\eqref{eq:radau-pair-measure};
    positive values are shown on a logarithmic scale, with the horizontal axis
    representing zero.  In each dimension, $g_2(r)=1$ for $r\geq\sigma$.  The
    bottom row uses the scaled wavenumber $2k/d$, and the dashed line marks
    $S(k)=1$.}
    \label{fig:radau-pair-structure}
\end{figure}

\section{Pair Correlations with the Optimal Exponential Rate}
\label{sec:finite-band-construction}

Recently, OpenAI announced a result that determines the optimal high-dimensional exponential rate of the Cohn–Elkies linear-programming bound and thereby improves the rigorous upper bound on the maximal sphere-packing density~\cite{openai2026math}. The preceding sections address the complementary problem of obtaining conjectural lower bounds through the Torquato–Stillinger pair-correlation framework. Here we relate these two formulations and construct ordinary pair-correlation test functions at the optimal Cohn–Elkies exponential rate $2^{-(0.6044005\ldots+o(1))d}$.

The Torquato–Stillinger pair-correlation program is the dual of the Cohn–Elkies program, as already observed in Ref.~\cite{To06b}. In the generalized measure formulation of Ref.~\cite{Co22}, strong duality shows that their optimal values coincide: there is no duality gap. We show that this common optimal value can be approached arbitrarily closely, in every fixed dimension, by ordinary nonnegative radial functions $g_2(r)$ with $S(k)\geq 0$ and $g_2(r)=1$ beyond a finite radius. These functions can moreover be chosen to have finitely many radial bands. Thus the unrestricted ordinary pair-correlation program has the optimal Cohn–Elkies exponential rate, although the construction does not assert that a single finite-band function attains the optimum.

For unit-diameter spheres, let
\begin{equation}
 p_d=\inf_f\frac{f(0)}{\widetilde f(0)},
 \qquad
 \phi_{\mathrm{CE}}(d)=\frac{v_d}{2^d}p_d,
 \label{eq:finite-band-ce-value}
\end{equation}
where the infimum is over radial Schwartz functions satisfying
$f(r)\leq0$ for $r\geq1$, $\widetilde f(k)\geq0$, and
$\widetilde f(0)>0$. In this notation, the OpenAI result is
\begin{equation}
 \phi_{\mathrm{CE}}(d)=2^{-(\alpha_*+o(1))d},
 \qquad
 \alpha_*=\frac12\log_2\!\left(\frac{2\pi}{e}\right)
          =0.6044005\ldots.
 \label{eq:finite-band-ce-exponent}
\end{equation}

The identification of the Torquato--Stillinger program with the dual of
the Cohn--Elkies program is already explicit in Ref.~\cite{To06b}.
In the rigorous generalized formulation~\cite[Sec.~3]{Co22}, the dual
maximizes $\rho$ over nonnegative pair measures $\mu$ supported on
$r\geq1$, subject to
\begin{equation}
 T=\delta_0+\mu,
 \qquad \widetilde T-\rho\delta_0\geq0.
 \label{eq:finite-band-dual-measure}
\end{equation}
For an ordinary pair correlation, $\rmd\mu(\bm r)=\rho g_2(r)\rmd\bm r$,
and the last inequality becomes
\begin{equation}
 \widetilde T=\rho\delta_0+S(k)\,\rmd\bm k,
 \qquad S(k)=1+\rho\widetilde h(k)\geq0,
 \qquad h(r)=g_2(r)-1.
 \label{eq:finite-band-dual-structure}
\end{equation}
Thus the support and positivity conditions on $\mu$ give precisely the
hard-core condition and $g_2(r)\geq0$, while positivity of the remaining
Fourier measure gives $S(k)\geq0$. Strong duality identifies the supremum
in this generalized program with $p_d$. The construction below shows
that the same supremum can be approached using ordinary functions with
finitely many radial bands.

\subsection{A Family with Finitely Many Radial Bands}

Choose radii $1=r_0<r_1<\cdots<r_N=R$ and nonnegative heights
$b_1,\ldots,b_N$, and set
\begin{equation}
 g_2(r)=
 \begin{cases}
 0,&0\leq r<1,\\
 b_j,&r_{j-1}\leq r<r_j,\quad 1\leq j\leq N,\\
 1,&r\geq R.
 \end{cases}
 \label{eq:finite-band-pair}
\end{equation}
The number of bands, their radii, and their heights may depend on $d$.
Unlike the family in Sec.~\ref{sec:radau-construction}, this family
contains no delta functions.

Let $\chi_a(\bm r)$ denote the indicator function of the ball of radius
$a$. With the Fourier convention used in this paper,
\begin{equation}
 \widetilde\chi_a(k)=
 \begin{cases}
 v_da^d,&k=0,\\
 (a/k)^{d/2}J_{d/2}(2\pi ak),&k>0.
 \end{cases}
 \label{eq:finite-band-ball-transform}
\end{equation}
The structure factor associated with Eq.~\eqref{eq:finite-band-pair} is
therefore
\begin{equation}
 S(k)=1-\rho\widetilde\chi_R(k)
       +\sum_{j=1}^N z_j
          [\widetilde\chi_{r_j}(k)-\widetilde\chi_{r_{j-1}}(k)],
 \qquad z_j=\rho b_j.
 \label{eq:finite-band-structure}
\end{equation}
For fixed radii, this expression is linear in the nonnegative variables
$\rho,z_1,\ldots,z_N$.

\begin{proposition}
\label{prop:finite-band-optimum}
For each $d\geq1$ and every $0<\varepsilon<1$, there are finite choices
of $N,R,r_j,b_j$ and a number density $\rho$ such that
Eq.~\eqref{eq:finite-band-pair} satisfies the hard-core condition,
$g_2(r)\geq0$, and $S(k)\geq\tau>0$ for every $k\geq0$, for some
$\tau>0$, with
\begin{equation}
 (1-\varepsilon)p_d<\rho\leq p_d.
 \label{eq:finite-band-approximation}
\end{equation}
Consequently, the supremum of the terminal packing fractions of this
family is $\phi_{\mathrm{CE}}(d)$ in every fixed dimension. In particular,
one can choose a member in each dimension with
\begin{equation}
 \phi_d=\frac{\rho_dv_d}{2^d}
       =2^{-(\alpha_*+o(1))d}.
 \label{eq:finite-band-density-rate}
\end{equation}
\end{proposition}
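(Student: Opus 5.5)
The upper inequality $\rho\le p_d$ is weak duality, and I would prove it directly. Take any admissible Cohn--Elkies function $f$ and any member of the family with $S\ge0$. Pair $f$ against the measure $T=\delta_0+\rho g_2(r)\,\rmd\bm r$ in two ways.
\begin{itemize}
\item In real space, $\langle T,f\rangle=f(0)+\rho\int f g_2\le f(0)$, because $g_2=0$ on $r<1$, $g_2\ge0$, and $f\le0$ for $r\ge1$.
\item In Fourier space, $\langle T,f\rangle=\langle\widetilde T,\widetilde f\rangle=\rho\widetilde f(0)+\int S\widetilde f\ge\rho\widetilde f(0)$, because $S\ge0$ and $\widetilde f\ge0$.
\end{itemize}
Hence $\rho\le f(0)/\widetilde f(0)$, and taking the infimum over $f$ gives $\rho\le p_d$. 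Once the lower inequality is proved, the statement about terminal packing fractions follows from $\phi=\rho v_d/2^d$. The rate \eqref{eq:finite-band-density-rate} then follows by choosing, say, $\varepsilon=1/d$ and invoking \eqref{eq:finite-band-ce-exponent}.

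For the lower inequality I start from strong duality~\cite[Theorem~3.1]{Co22}. It gives a feasible generalized pair $(\rho_0,\mu)$ for \eqref{eq:finite-band-dual-measure} with $\rho_0>(1-\varepsilon/4)p_d$: here $\mu\ge0$ is a radial measure supported on $r\ge1$, and $\widetilde T-\rho_0\delta_0\ge0$. I then regularize $\mu$ in four steps. Each step either keeps the Fourier positivity condition or improves it, and each costs at most a factor $(1-\varepsilon/4)$ in density.

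\emph{Step 1 (dilation).} Rescale $\mu$ radially by a factor $1+\delta$ and reduce the density by $(1+\delta)^{-d}$. The support is then in $r\ge1+\delta$, and feasibility is preserved.

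\emph{Step 2 (mollification).} Write the pair measure as $\rho\,\rmd\bm r+\rho h$, where $h$ is a signed measure. Replace $h$ by $h*\eta$, with $\eta\ge0$ a radial mollifier supported in the ball of radius $\delta/2$ and with $0\le\widetilde\eta\le1$; take $\eta=\psi*\psi$ normalized. Then
\[
S_{\rm new}=(1-\widetilde\eta)+\widetilde\eta\,S\ge0 .
\]
The new $g_2$ is a continuous nonnegative function, it still vanishes on $r<1+\delta/2$, and it is radial.

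\emph{Step 3 (cutoff).} Multiply $h$ by a radial cutoff $\psi_R$ with $0\le\psi_R\le1$, $\psi_R(0)=1$, $\widetilde\psi_R\ge0$, and $\psi_R$ compactly supported; for example, an autocorrelation of a ball indicator. Then $g_2^{\rm new}=\psi_Rg_2+(1-\psi_R)\ge0$, and the new function is exactly $1$ beyond a finite radius $R$. In Fourier space, $S_{\rm new}=1+\rho\,\widetilde h*\widetilde\psi_R$ equals the convolution of the nonnegative measure $S$ with the probability density $\widetilde\psi_R$, so it is nonnegative. One must also check that the singular part at $k=0$ is consistent. This is where hyperuniform or Bragg-type mass at the origin in the generalized measure could be delicate.

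\emph{Step 4 (strict positivity).} Replace $\rho$ by $(1-\eta')\rho$ with $g_2$ fixed. Then $S_{\rm new}=\eta'+(1-\eta')S\ge\eta'=:2\tau$.

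Finally, I discretize. On $[1,R]$, choose a fine partition $1=r_0<\cdots<r_N=R$ and set $b_j=\min_{[r_{j-1},r_j]}g_2\ge0$. This gives exactly the form \eqref{eq:finite-band-pair} after shifting the hard core back to $1$; the region between $1$ and $1+\delta/2$ is simply a band of height $0$. Since $|\widetilde u(k)|\le\|u\|_{L^1}$, the structure factor changes by at most $\rho\|g_2-g_2^{\rm step}\|_{L^1(B_R)}$ uniformly in $k$. By uniform continuity on the compact set, this error is at most $\tau$ for $N$ large, so $S\ge\tau$. Multiplying the four losses gives $\rho>(1-\varepsilon)p_d$.

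The main obstacle is making Steps 2--3 rigorous at the level of generalized measures from~\cite{Co22}. Two points need care. First, the singular and atomic parts of $\mu$ (contact shells, possibly lattice-like atoms) and any mass of $\widetilde T-\rho\delta_0$ at the origin must survive the convolution and multiplication identities; I would verify the identities by testing against Schwartz functions. Second, $\mu-\rho\,\rmd\bm r$ must be tempered, so that the products and convolutions above are well defined.
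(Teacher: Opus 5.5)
Your overall architecture matches the paper's sketch: a near-optimal generalized pair measure from strong duality, smoothing, a cutoff to a flat tail, a positive margin in $S$, and band approximation controlled by $\rho\|g_2-g_{2,\mathrm{approx}}\|_{L^1}$. The weak-duality half is correct.

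\textbf{The gap: Step~3 destroys the hard core.} After Step~2 you have $g_2=0$ for $r<1+\delta/2$, so there your new function is $g_2^{\mathrm{new}}(r)=1-\psi_R(r)$. This is strictly positive for every $r>0$, and no choice of $\psi_R$ avoids it. If $\psi_R$ is continuous with $\widetilde\psi_R\ge0$ and $\psi_R(0)=1$, then $|\psi_R(x+y)-\psi_R(x)|^2\le 2\bigl(1-\psi_R(y)\bigr)$. Hence any $y\neq0$ with $\psi_R(y)=1$ would be a period of $\psi_R$, which is impossible for a compactly supported function. So your output violates the hard-core condition, and the ``band of height $0$'' you use in the discretization is not what Step~3 produces. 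A warning sign is that your cutoff costs no density at all. The paper's hard-core-preserving cutoff, built from volumes of intersections of balls, costs a factor $(1-2/R)^d$.

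Conversely, the issue you flagged as the main obstacle is harmless. After Step~2, $S-1$ is a finite measure ($\widetilde\eta$ times a tempered measure), so the convolution in Step~3 is well defined. Convolving with the probability density $\widetilde\psi_R$ simply smears any atom of $S$ at $k=0$ into a nonnegative bump.

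\textbf{Two repairs.} Either one, inserted after your Step~2, completes the argument.

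(i) Do Step~4 before Step~3. The bound $S\ge\eta'$ survives convolution with $\widetilde\psi_R$. Then delete the spurious core mass by replacing $g_2$ with $g_2\,(1-\chi_{1+\delta/2})$. This changes $S$, uniformly in $k$, by at most $\rho\int_{|\bm r|\le 1+\delta/2}(1-\psi_R)\,\rmd\bm r$. With $\psi_R=\psi_1(\cdot/R)$ this tends to $0$ as $R\to\infty$, so in fixed $d$ a large enough $R$ keeps $S\ge\eta'/2$.

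(ii) Closer to the paper's range estimate:
\begin{itemize}
\item Multiply all of $T=\delta_0+\mu$ by a positive-definite window $\psi$ with $\psi(0)=1$. This keeps $\mu$ supported on $r\ge1$ and gives $\widetilde{\psi T}=\widetilde\psi*\widetilde T\ge\rho\widetilde\psi$.
\item Restore the tail by adding $\rho'(1-\xi)$, where $\xi=\chi_a*\chi_b/\vol(B_b^d)$ with $a=b+1$. This $\xi$ equals $1$ on $r\le1$ and satisfies $0\le\xi\le1$.
\item Take $\psi=\kappa(\chi_a*\chi_a+\chi_b*\chi_b)/(2\vol(B_b^d))$, with $\kappa=2b^d/(a^d+b^d)$ and $\rho'=\kappa\rho$.
\item The inequality $2|\widetilde\chi_a\widetilde\chi_b|\le\widetilde\chi_a^2+\widetilde\chi_b^2$ then gives $\widetilde{T'}-\rho'\delta_0=\widetilde\psi*S+\rho\widetilde\psi-\rho'\widetilde\xi\ge0$.
\item The new pair measure vanishes on $r<1$ and equals $\rho'$ for $r\ge R=2a$, and $\rho'/\rho\ge(1-2/R)^d$.
\end{itemize}
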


Equation~\eqref{eq:finite-band-approximation} concerns approximation of
the optimal value; it does not assert that a single finite-band function
attains $p_d$. Taking any fixed $\varepsilon$, for example
$\varepsilon=1/2$, already gives Eq.~\eqref{eq:finite-band-density-rate},
since a constant factor does not change the exponential rate.

The parameters can be found by solving finite linear programs. For
completeness, a sufficient finite set of constraints ensuring the global
nonnegativity of $S(k)$ is as follows. Define
\begin{equation}
 A=\rho v_dR^d+\sum_{j=1}^N z_jv_d(r_j^d-r_{j-1}^d).
 \label{eq:finite-band-mass-bound}
\end{equation}
Choose $0<\tau<1$, a spacing $\Delta>0$, and $K=L\Delta$ with
$L$ a positive integer. Maximize $\rho$ subject to nonnegativity of the
variables and
\begin{align}
 S(\ell\Delta)&\geq\tau+2\pi R\Delta A,
       &&\ell=0,\ldots,L-1,\label{eq:finite-band-grid}\\
 K^{-d/2}\left[\rho R^{d/2}
    +\sum_{j=1}^N z_j(r_j^{d/2}+r_{j-1}^{d/2})\right]
       &\leq1-\tau.
       \label{eq:finite-band-tail}
\end{align}
Indeed, differentiation under the Fourier integral gives
$|S'(k)|\leq2\pi RA$, so Eq.~\eqref{eq:finite-band-grid} ensures
$S(k)\geq\tau$ between successive grid points. The bound
$|J_{d/2}(x)|\leq1$ gives the same conclusion for $k\geq K$ from
Eq.~\eqref{eq:finite-band-tail}. These constraints therefore control all
wavenumbers, including those beyond the finite grid.

One searches over rational partitions and choices of $\tau,\Delta,K$,
using conservative rational bounds for the Fourier coefficients in the
constraints. For any supplied target density below $p_d$, this search
eventually produces a finite table exceeding that target. The proof of
Proposition~\ref{prop:finite-band-optimum} starts from an optimal
generalized pair measure, whose existence follows from
Ref.~\cite[Proposition~3.6]{Co22}. A cutoff and smoothing procedure
produces an ordinary function with a flat tail and arbitrarily small
density loss. A strictly positive margin in $S(k)$ then permits
approximation by radial bands: the Fourier error is at most
$\rho\|g_2-g_{2,\mathrm{approx}}\|_{L^1}$, uniformly in $k$.
The band parameters are obtained algorithmically; they are not given by
a closed formula in $d$.

\subsection{Decorrelation and the Range of the Pair Correlations}

The family in Eq.~\eqref{eq:finite-band-pair} satisfies the ordinary
pair decorrelation property in the exact form $g_2(r)=1$ for $r\geq R$.
An integrated, or generalized, decorrelation statement follows directly.
Writing $Z(s)$ for the cumulative coordination specified by $g_2$, the
excess cumulative coordination is
\begin{equation}
 M_d(s)=Z(s)-\rho v_ds^d
   =\rho\int_{|\bm r|\leq s}[g_2(r)-1]\,\rmd\bm r
   =S(0)-1,\qquad s\geq R.
 \label{eq:finite-band-excess}
\end{equation}
Thus no further excess coordination accumulates beyond $R$. The constant
equals $-1$ when $S(0)=0$; ordinary decorrelation alone does not impose
this additional hyperuniformity condition. Here the integrated statement
is a consequence of the flat tail, and is not an extra assumption on the
construction. The finite-radius property in each dimension also does not
by itself imply pointwise convergence $g_{2,d}(r)\to1$ at every fixed
$r>1$ as $d\to\infty$.

There is a stronger quantitative statement about the possible radius
$R$. Let $\phi_{\mathrm{band}}(d;R)$ be the supremum of
$\rho v_d/2^d$ over functions of the form
\eqref{eq:finite-band-pair} with $S(k)\geq0$ and $g_2(r)=1$ for
$r\geq R$. Then
\begin{equation}
 \left(1-\frac2R\right)^d\phi_{\mathrm{CE}}(d)
 \leq\phi_{\mathrm{band}}(d;R)
 \leq\phi_{\mathrm{CE}}(d),\qquad R>2.
 \label{eq:finite-band-range}
\end{equation}
The lower bound follows from a transformation based on volumes of
intersections of balls. It preserves the hard core and both
nonnegativity conditions, makes $g_2(r)$ identically one for $r\geq R$,
and multiplies the density by $(1-2/R)^d$. Approximating the transformed
function by radial bands gives the stated inequality for the supremum.
Consequently, any prescribed sequence $R_d>2$ tending to infinity,
however slowly, is compatible with the optimal exponent in
Eq.~\eqref{eq:finite-band-density-rate}: the additional contribution
$-\log_2(1-2/R_d)$ to the exponent tends to zero. For example, one may
take $R_d=4+2\log\log(d+e)$. No claim of an optimal exponent with a
dimension-independent cutoff follows from this estimate.

Detailed analytic proofs of the constructions in
Secs.~\ref{sec:radau-construction} and \ref{sec:finite-band-construction},
including Eq.~\eqref{eq:finite-band-range}, will be given in follow-up work.

\section{Discussion and conclusions}
\label{sec:discuss}

We have obtained a family of conjectural lower bounds on the maximal packing fraction of the form $\phi_{\max}\gtrsim e^{-c}d^{\beta}2^{-d}$ for every fixed $\beta>1$ and $c>1$ in sufficiently high dimensions. These bounds arise from a single family of hyperuniform pair correlation functions, Eq.~\eqref{eq:Zsigmaphi}, in which the contact weight $Z=d^{\beta}$ determines the polynomial enhancement while the gap $\sigma-1=c/d$ preserves it up to a dimension-independent factor. Thus, the different polynomial improvements do not require unrelated choices of $g_2(r)$, but are generated by varying a single parameter within one construction.

Our analysis shows that the structure-factor constraint requires $\beta<\beta_c(d)$, where $\beta_c(d)$ diverges with dimension. Consequently, every fixed exponent $\beta > 1$ eventually satisfies the two-point conditions. If $\beta$ is instead allowed to increase with $d$ toward $\beta_c(d)$, the resulting densities recover the previously conjectured Torquato--Stillinger exponential asymptotic form, including its subexponential correction up to a multiplicative constant.

The Gauss--Radau construction of Sec.~\ref{sec:radau-construction} shows that the Torquato--Stillinger exponential rate can be surpassed once the number of independently adjustable radial shells is allowed to grow with dimension.  The construction has $m=\lfloor(d-1)/4\rfloor$ positive delta-function shells, satisfies the ordinary decorrelation property $g_2(r)=1$ beyond a finite radius, and has terminal-density exponent $0.622556248918\ldots$.  This exponent lies strictly between the Torquato--Stillinger exponent $0.7786524795\ldots$ and the unrestricted generalized two-point optimum $0.6044005\ldots$.

Theorem~\ref{theorem:bound} establishes that every admissible radial Cohn--Elkies certificate has objective value at least as large as the Torquato--Stillinger exponential scale. This is a dual obstruction, not an evaluation of the unrestricted Cohn--Elkies optimum: it shows that the radial program cannot exclude candidate densities with exponent $\alpha_{\mathrm{TS}}=0.7786524795\ldots$. The analysis of Sec.~\ref{sec:universality} further shows why this exponent persists among the fixed-complexity, near-contact pair-correlation families considered here. By contrast, strong duality and the recent determination of the unrestricted generalized pair LP give the smaller exponent $\alpha_*=0.6044005\ldots$ in Eq.~\eqref{eq:finite-band-ce-exponent}. Thus $\alpha_{\mathrm{TS}}$ describes the universality class of the restricted ansatzes studied in this paper, not the optimum of the full two-point relaxation.

The finite-band construction of Sec.~\ref{sec:finite-band-construction} reaches the optimal exponential rate $\alpha_*=0.6044005\ldots$ using ordinary nonnegative pair correlation functions with a flat tail. In every fixed dimension, their admissible density parameters approach the Cohn--Elkies value arbitrarily closely. Moreover, Eq.~\eqref{eq:finite-band-range} shows that any prescribed cutoff radius tending to infinity with dimension suffices to retain the optimal exponent. Thus ordinary pair decorrelation in the form $g_2(r)=1$ beyond a finite radius places no restriction on the attainable two-point exponential rate. The corresponding integrated decorrelation statement follows from this flat tail, without imposing a separate generalized decorrelation hypothesis.

In summary, we have shown that there are test pair-correlation functions
$g_{2}$ that satisfy the hard-core condition,
$g_{2}(r)\geq 0$, $S(k)\geq 0$, and the decorrelation
principle, and that attains the optimal Cohn--Elkies pair-LP bound with no
primal--dual gap in the high-$d$ limit. However, this  establishes optimality only at the
\emph{two-point level} of the sphere-packing problem. This would not, by
itself, prove that $g_{2}$ is realizable by a translationally
invariant disordered packing. Realizability requires the existence of a
compatible hierarchy of $n$-particle correlation functions
$g_n$, $n\geq 3$, satisfying the appropriate symmetry,
consistency, hard-core, and positivity conditions. To promote the pair-LP
optimizer to an actual packing, one would need to show that, for every
fixed $n\geq 3$,
\begin{equation}
g_n
=
\mathcal{F}_n\!\left[\rho,g_2\right]
+
\varepsilon_{n},
\label{g2}
\end{equation}
where the functional $\mathcal{F}_n$ is determined solely by the density and pair
correlation, and the error $\varepsilon_{n}$ tends to zero in a norm
strong enough to preserve all relevant realizability inequalities. Without such a result, higher-order
constraints could still rule out the candidate $g_{2}$, even
though it is pair-LP feasible, pointwise decorrelating, and asymptotically
saturates the optimal upper bound.
Importantly,  if relation (\ref{g2}) were established, then it would imply that the densest sphere packings in the high-$d$ limit are disordered.
A natural next step is to seek sphere-packing constructions guided by these pair correlations.

\section*{Acknowledgement}
We are grateful to Henry Cohn for many illuminating discussions and collaborations on closely related topics and, in particular, for suggesting finding an analogue of Samorodnitsky’s proof, helping streamline it, and for suggesting the compact-range regularization of generalized pair measures underlying the finite-band construction.
The work of ST and CV is supported by the Army Research Office under Cooperative Agreement No. W911NF-22-2-0103.\\
We acknowledge the use of OpenAI's GPT-5.6 Sol to assist with proofreading the manuscript, identifying passages and arguments that could be clarified or improved, helping to develop the code used to generate figures, assisting the development of the range estimate in Sec.~\ref{sec:finite-band-construction}, and assisting in finding the Gauss--Radau construction in Sec.~\ref{sec:radau-construction}. All scientific arguments, results, and conclusions were independently reviewed and verified by the authors, who take full responsibility for the content of the manuscript.

\bibliographystyle{unsrt}
\bibliography{new.bib}

\end{document}